%% file: paper.tex
\documentclass{fairmeta}
\usepackage{amsmath}
\usepackage{enumerate}
\usepackage{bm}
\usepackage{algorithm}
\usepackage{floatflt}
\usepackage{algpseudocode}
\usepackage{amsfonts}
\usepackage{amsthm}
\usepackage{newtxtt}
\usepackage{algorithm}
\usepackage{colortbl}
\usepackage{cleveref}
\usepackage{diagbox}
\usepackage[utf8]{inputenc}
\usepackage{textgreek}
\usepackage{colortbl}
\usepackage{nicematrix}
\usepackage{makecell}
\usepackage{float}
\usepackage{arydshln}
\usepackage[frozencache,cachedir=.]{minted}
\usepackage{caption}
\usepackage{subcaption}
\usepackage{tcolorbox}
\usepackage{amssymb}
\usepackage{xspace}
\usepackage{wrapfig}
\usepackage{adjustbox}
\usepackage{tabularx}
\usepackage{booktabs}
\usepackage{mathtools}
\usepackage{wrapfig}
\usepackage{amssymb}
\usepackage{graphicx}

\usepackage{amsmath,amssymb}
\usepackage[level=0]{wgroup_message}  

\usepackage{silence}
\makeatletter
\patchcmd{\wrong@fontshape}{\@gobbletwo}{}{}{}
\makeatother
\definecolor{upColor}{RGB}{17,138,21}
\definecolor{downColor}{RGB}{174,36,67}

\newtheorem{theorem}{Theorem}[]

\newtheorem{remark1}[theorem]{Remark}

\input{manuscript-commands.tex}

\title{Learning Graph Beliefs for Active Radio Mapping With Decision-Certified
Multi-Step Sensing}

\author[1]{Wangqian~Chen}
\author[2]{Hao~Wang}
\author[1]{Yumeng~Zhang}
\author[1]{Chenghong~Bian}
\author[2]{Junting~Chen}
\author[1]{Jun~Zhang}

\affiliation[1]{The Hong Kong University of Science and Technology}
\affiliation[2]{The Chinese University of Hong Kong, Shenzhen}

\metadata[Correspondence to]{Jun Zhang (\email{eejzhang@ust.hk})}

\begin{document}

\abstract{
Radio maps provide essential wireless propagation knowledge for network
optimization and autonomous platforms. Acquiring dense measurements,
however, is prohibitively costly, requiring active sampling to prioritize
informative measurements. Existing active sampling methods typically
rely on variances from prescribed spatial models or learned pointwise
uncertainty, which may not fully capture the spatial correlation of
different measurements. This paper proposes GraphBARC, a graph-structured
active radio map construction framework that integrates learned spatial
modeling with analytical Bayesian inference. A graph network learns
correlated node and edge factors using only acquired measurements,
inducing a joint full-map belief that supports analytical observation
updates without retraining. Rather than immediate one-step acquisition,
it adopts a nonmyopic sensing strategy that evaluates joint map-wide
Bayes risk reduction over feasible multi-step routes, accounting for
information redundancy among planned measurements. We reuse candidate
responses and derive residual-based bounds directly on route-value
differences, retaining shared response-error structure to avoid uniformly
high-accuracy evaluation. An error-aware continuation bound screens
unpromising prefixes, after which only responses needed for unresolved
comparisons are refined. We further establish a computable decision-gap
bound and show that certified exclusions remain valid during refinement.
Experiments show that GraphBARC reduces mean root-mean-square error
(RMSE) by over 16\% relative to the evaluated baselines and core planning
time by 56\% compared with the fixed-accuracy, unpruned implementation.
}

\maketitle

\section{Introduction}

Low-altitude wireless networks are expected to support
a wide range of aerial platforms, including \acpl{uav}, aerial robots,
and emerging air mobility terminals. These platforms require reliable
connectivity for navigation, trajectory planning, sensing, data collection,
and control \citep{WangSunZem:J25}. In urban environments, however,
wireless channels can vary significantly with location, building blockage,
\ac{los} availability, and surrounding scatterers. As a result, low-altitude
communication systems require accurate and timely awareness of the
radio environment. In this context, radio maps provide a natural representation
of wireless environmental knowledge, such as \ac{rss}, interference
level, beamforming gain, or low-dimensional \ac{csi} descriptors,
which can support connectivity-aware trajectory planning, coverage-hole
detection, beam management, and wireless digital twins \citep{ZenChenJ:24,LiBoChen:J24,WuZengJin:J24}.
Therefore, an accurate radio map can significantly improve the safety
and efficiency of autonomous aerial operations.

Constructing high-fidelity radio maps is challenging due to the prohibitive
costs of dense spatial sampling. Under a sensing budget, it is crucial
to prioritize highly informative measurements in complex environments.
While existing approaches, including interpolation techniques \citep{ShaoZhaLi:J26,SunChe:J24},
tensor completion \citep{TimiShre:J24,ZhaFu:J20}, and deep learning
models \citep{WangTaoNan:J25,ChenWChen:C25,TegaRomer:J22,JaenCair:J26,WangqianChen:J24},
can reconstruct radio maps from sparse observations, they typically
rely on predefined measurements that are sufficiently representative.
These \emph{passive} approaches do not actively decide where to measure
and thus fail to reduce measurement overhead. In online missions with
limited flight time, the lack of immediate feedback during passive
data collection can lead to redundant, low-quality, or uninformative
measurement efforts.

Active radio map construction integrates sensing and reconstruction
into a closed-loop process. As new measurements arrive, it updates
the current map estimate and uses current insights to guide subsequent
measurements. Compared with offline passive reconstruction, this active
setting is more challenging as it requires a principled mechanism
to quantify measurement informativeness to select high-utility samples.

Model-based Bayesian frameworks provide principled uncertainty estimates
for active radio map sensing \citep{ChenXiao:J2025,LinTao:J2026,ZhangZhouYou:C25,PolySadYe:J24,ShreRomeChe:J23}.
Gaussian process approaches \citep{ZhangZhouYou:C25,PolySadYe:J24}
have been widely adopted to facilitate sampling through analytical
posterior variances and specialized acquisition functions. The model-based
estimator in \citep{ShreRomeChe:J23} further enables online Bayesian
updates under the Gudmundson shadowing model. However, their fixed
or predefined kernels may inadequately capture heterogeneous and site-specific
spatial correlations in complex environments.

Deep learning approaches, such as UNet and Transformers \citep{ChenWangGuo:J25,ZheXiao:C25},
can learn spatial representations directly from data. While these
deterministic models can achieve high-resolution predictions, they
typically lack a principled uncertainty quantification for active
sensing. Deep reinforcement learning \citep{ZhaoHaoHua:J24} and trajectory
planning \citep{LiYaoAls:J26} have been developed to optimize UAV
measurement paths. However, these methods often demand extensive offline
pre-training or trial-and-error episodes, imposing heavy computational
or training burdens that hinder real-time deployment on resource-constrained
\ac{uav} platforms.

Generative models have enabled uncertainty-aware reconstruction through
ensembles of generated samples. Flow matching \citep{SunShicXiang:C26}
uses location-wise ensemble variance to guide candidate selection
and utility-aware path planning, while diffusion-based spectrum cartography
\citep{gu2025generat} combines posterior sampling with diversity-aware
sensing-location selection. Although these methods capture spatial
distributions, their iterative sampling procedures incur substantial
computational overhead for online deployment.

In summary, existing approaches face two key limitations. First, model-based
methods support analytical updates but rely on prescribed covariance
structures, whereas data-driven methods learn richer spatial structures,
yet often provide only deterministic predictions or marginal uncertainty.
Second, effective sensing requires evaluating the joint map-wide benefit
of multiple measurements under mobility constraints, rather than solely
one-step scores, as high local uncertainty does not necessarily imply
a large reduction in overall uncertainty. Some studies \citep{LuGao:C25,ChenZhuWang:J26}
have extended beyond one-step acquisition, but they require selecting
a destination before planning the route, which may overlook the joint
value of the endpoint and intermediate measurements. Joint route optimization
also entails combinatorial search and repeated posterior computations,
whose numerical errors may still affect route ranking.

To address these limitations, we develop GraphBARC, a graph-induced
Bayesian framework that learns a full-map belief with explicit cross-location
dependencies using nonmyopic active sensing. Newly acquired measurements
are assimilated analytically, with their information propagated across
correlated locations through the learned graph dependencies. Rather
than relying on one-step gains, GraphBARC evaluates joint map-wide
Bayes risk reduction over feasible multi-step routes. To reduce computational
cost, shared posterior responses are reused across routes, while residual-certified
bounds on route-value differences enable safe prefix pruning and selective
numerical refinement of unresolved comparisons. New observations immediately
update the belief, while periodic adaptation refines the graph factors
as more data accumulate.

The novelty and contributions are summarized as follows:
\begin{itemize}
\item We develop a graph-factor network that combines local message passing
and graph attention to learn map-wide propagation dependencies from
acquired measurements without external side information. Unlike deterministic
predictors, it induces a full-map belief that jointly characterizes
the map estimate and its uncertainty, while remaining compatible with
analytical Bayesian conditioning.
\item We derive bounds on the response inner-product and innovation-inverse
errors induced by approximate posterior solves, and use them to certify
route-value differences directly rather than separately certifying
each value. The resulting comparison intervals guide selective refinement
and can be tighter than those obtained by adding separate route-error
bounds under the same allowances.
\item We derive an error-aware continuation bound for safe prefix pruning
and establish a computable decision gap under a fixed belief and complete
route coverage. Certified prefix exclusions remain valid during subsequent
numerical refinement, allowing unresolved comparisons to be refined
without repeating the prefix search.
\item Experiments show that GraphBARC achieves over 16\% lower reconstruction
RMSE than the evaluated baselines while reducing core planning time
by 56\% relative to the fixed-accuracy, unpruned implementation.
\end{itemize}

The rest of the paper is organized as follows. The system model is
reviewed in Section II. Section III-V develop the graph-induced Bayesian
active learning framework and the corresponding implementation. Numerical
results are presented in Section VI, and Section VII concludes the
paper.

\begin{figure}[!t]
\centering\includegraphics[width=0.85\linewidth]{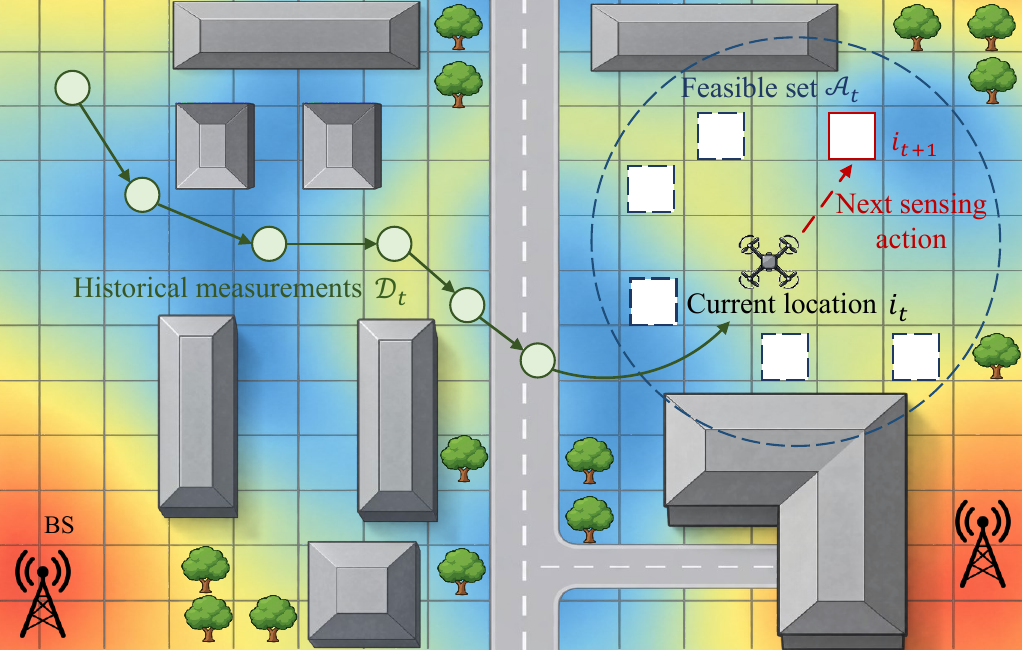}\caption{System scenario for active radio map construction, where a sensing
platform sequentially acquires measurements and selects the sensing
locations under sensing-budget constraints to improve overall reconstruction
accuracy.}
\label{fig:system_fig}
\end{figure}

\section{System Model and Problem Formulation}

We consider a low-altitude urban environment with multiple \acpl{bs}.
Let $\mathcal{X}\subset\mathbb{R}^{D}$ be the area of interest, where
$D$ is the spatial dimension. Discretizing $\mathcal{X}$ into $M$
grid cells gives the radio map $\mathbf{X}=[\mathbf{x}^{\mathrm{T}}_{1},...,\mathbf{x}^{\mathrm{T}}_{M}]\in\mathbb{R}^{M\times d}$,
where $\mathbf{x}_{i}\in\mathbb{R}^{d}$ represents the channel feature
associated with the $i$th cell, and $d$ is the feature dimension.
For a power map, $d=1$, and $\mathbf{x}_{i}$ reduces to the average
power value within the cell.

Let $\boldsymbol{\xi}=\mathrm{vec}(\mathbf{X}^{\mathrm{T}})\in\mathbb{R}^{Md}$,
where $\mathrm{vec}(\cdot)$ denotes column-wise vectorization. As
shown in Fig.~\ref{fig:system_fig}, a mobile platform visits selected
cells and acquires measurements to refine the current radio map. Let
$\mathbf{E}_{i}=\mathbf{e}^{\mathsf{T}}_{i}\otimes\mathbf{I}_{d}\in\mathbb{R}^{d\times Md}$,
where $\mathbf{e}_{i}\in\mathbb{R}^{M}$ is the $i$th canonical basis
vector, $\mathbf{I}_{d}$ is the $d\times d$ identity matrix, and
$\otimes$ denotes the Kronecker product. It follows that $\mathbf{E}_{i}\boldsymbol{\xi}=\mathbf{x}_{i}$.
The channel measurement collected at cell $i$ is thus given by
\begin{equation}
\mathbf{y}_{i}=\mathbf{E}_{i}\boldsymbol{\xi}+\mathbf{n}_{i},\hspace*{0.3cm}\mathbf{n}_{i}\sim\mathcal{N}(\mathbf{0},\mathbf{R}_{i}),
\end{equation}
where $\mathbf{R}_{i}\in\mathbb{R}^{d\times d}$ denotes the measurement
noise covariance.

Let $\mathcal{D}_{0}$ denote a small set of initialization measurements
acquired in the current scene. After $t$ active sensing actions,
the available measurement set is $\mathcal{D}_{t}=\mathcal{D}_{0}\cup\left\{ (i_{s},\mathbf{y}_{s}):1\leq s\leq t\right\} $.
Given the radio map estimator $f_{\boldsymbol{\theta}}(\cdot)$ with
parameters $\boldsymbol{\theta}$ and the sensing policy $\pi(\cdot)$,
the next location is selected by
\begin{equation}
i_{t+1}=\pi(\mathcal{D}_{t},f_{\boldsymbol{\theta}}(\mathcal{D}_{t}))\in\mathcal{A}_{t}\label{eq:channel_vector-2}
\end{equation}
where $\mathcal{A}_{t}$ denotes the feasible set at step $t$. Consider
a sensing budget $\mathit{\Gamma}$ and an available travel distance
$\ell_{\max}$, the active radio map construction problem is formulated
as
\begin{equation}
\begin{array}{c}
(\pi^{*},\boldsymbol{\theta}^{*})\in\mathrm{\underset{\pi,\boldsymbol{\theta}}{\textrm{argmin}}}\hspace{0.3cm}\mathbb{E}[\mathcal{L}(f_{\boldsymbol{\theta}}(\mathcal{D}^{\pi}_{\varGamma}),\boldsymbol{\xi})]\\
\hspace*{2cm}\mathrm{s.t.}\quad\sum^{\varGamma}_{t=1}c_{t}\leq\ell_{\max},
\end{array}
\end{equation}
where $\mathcal{D}^{\pi}_{\varGamma}$ denotes the measurement history
collected after $\varGamma$ sensing steps under policy $\pi(\cdot)$,
$\mathcal{L}(\cdot)$ is the reconstruction loss, and $c_{t}$ denote
the corresponding travel distance at step $t$.

Active radio map construction aims at identifying informative measurements
to minimize the final reconstruction error. This task thus involves
the joint optimization of radio map reconstruction and measurement-location
selection.

\section{Graph-Structured Belief and Inference\label{sec:Graph-Structured-Belief}}

Active radio mapping requires a spatial representation that supports
both reconstruction and sensing decisions. The challenge is to adapt
spatial dependencies to the acquired measurements while retaining
an explicit mechanism for assimilating new observations and evaluating
future measurements. We construct a graph-structured belief that captures
spatial dependencies, whose induced covariance governs both analytical
measurement updates and map-wide Bayes-risk reduction, providing a
unified basis for reconstruction and active sensing.

\subsection{Learned Graph-Induced Full-Map Belief}

A local observation informs unmeasured locations only if the model
captures their spatial dependence. Classical interpolation kernels
or predefined Gaussian models may be restrictive in complex environment
that affected by blockages and scatterers. While purely data-driven
neural estimators learn complex spatial mappings, they lack a reliable
uncertainty estimates or an analytic full-map belief for active sampling.

Our key idea is to use the network to learn a graph, and model the
radio map as a graph-structured random field. Let $\mathcal{G}=(\mathcal{V},\mathcal{E})$
be a graph over $M$ cells, where each node $i\in\mathcal{V}$ corresponds
to one cell, and edge $(i,j)\in\mathcal{E}$ represents a structural
relation from cell $i$ and $j$. The graph captures spatial dependencies
learned from the acquired measurements.

For an acquired context $\mathcal{D}_{\tau}$, it learns a node prior
$\mathbf{m}_{i}\in\mathbb{R}^{d}$ and a precision matrix $\boldsymbol{\Omega}_{i}\in\mathbb{R}^{d\times d}$.
The node factor is
\begin{equation}
\psi_{i}(\mathbf{x}_{i};\mathcal{D}_{\tau})=\exp[-\frac{1}{2}(\mathbf{x}_{i}-\mathbf{m}_{i})^{\mathsf{T}}\mathbf{\Omega}_{i}(\mathbf{x}_{i}-\mathbf{m}_{i})].\label{eq:graph_node}
\end{equation}
Here, $\mathbf{m}_{i}$ specifies the local estimation tendency at
cell $i$, and $\mathbf{\Omega}_{i}\succ0$ measures the confidence
of this prior.

For an edge $(i,j)\in\mathcal{E}$, we define an edge relation as
\begin{equation}
\adjustbox{max width=.91\linewidth}{\ensuremath{\varphi_{ij}(\mathbf{x}_{i},\mathbf{x}_{j};\mathcal{D}_{\tau})=\exp[-\frac{1}{2}(\mathbf{x}_{i}-\mathbf{M}_{ij}\mathbf{x}_{j})^{\mathsf{T}}\mathbf{W}_{ij}(\mathbf{x}_{i}-\mathbf{M}_{ij}\mathbf{x}_{j})],}}\label{eq:graph_edge}
\end{equation}
where $\mathbf{M}_{ij}\in\mathbb{R}^{d\times d}$ describes the correlation
from cell $i$ to cell $j$, and $\mathbf{W}_{ij}\in\mathbb{R}^{d\times d}$
is a positive edge precision. Because these factors are context dependent,
the learned dependence can vary across location, direction, and feature
dimensions.

Combining these factors gives the full-map belief
\begin{equation}
q_{\tau}(\boldsymbol{\xi};\mathcal{D}_{\tau})\propto\prod^{M}_{i=1}\psi_{i}(\mathbf{E}_{i}\boldsymbol{\xi};\mathcal{D}_{\tau})\prod_{(i,j)\in\mathcal{E}}\varphi_{ij}(\mathbf{E}_{i}\boldsymbol{\xi},\mathbf{E}_{j}\boldsymbol{\xi};\mathcal{D}_{\tau}).\label{eq:graph_prior}
\end{equation}
This factorization combines local information and cross-cell structural
dependence. The node factors describe what each cell is expected to
be, while the edge factors describe how information should propagate
between related cells. Thus, the graph belief provides a probabilistic
mechanism for propagating information beyond the set of directly measured
locations.
\begin{prop}
\label{thm:Proposed1} (Graph-induced full-map belief). Suppose that
$\mathbf{\Omega}_{i}\succ\mathbf{0}$, $\forall i\in\mathcal{V}$
and $\mathbf{W}_{ij}\succeq\mathbf{0}$, $\forall(i,j)\in\mathcal{E}$.
Define
\begin{equation}
\boldsymbol{\Lambda}_{\tau}=\sum^{M}_{i=1}\mathbf{E}^{\mathsf{T}}_{i}\mathbf{\Omega}_{i}\mathbf{E}_{i}+\sum_{(i,j)\in\mathcal{E}}\mathbf{A}^{\mathsf{T}}_{ij}\mathbf{W}_{ij}\mathbf{A}_{ij}\label{eq:Lambda_g}
\end{equation}
and
\begin{equation}
\boldsymbol{\eta}_{\tau}=\sum^{M}_{i=1}\mathbf{E}^{\mathsf{T}}_{i}\mathbf{\Omega}_{i}\mathbf{m}_{i},\hspace*{0.2cm}\mathbf{A}_{ij}\triangleq\mathbf{E}_{i}-\mathbf{M}_{ij}\mathbf{E}_{j}.\label{eq:Lambda_g-1}
\end{equation}
Then, $\boldsymbol{\Lambda}_{\tau}\succ\mathbf{0}$, and the graph-induced
belief in (\ref{eq:graph_prior}) defines the proper Gaussian distribution
\begin{equation}
q_{\tau}(\boldsymbol{\xi};\mathcal{D}_{\tau})=\mathcal{N}(\boldsymbol{\xi};\boldsymbol{\mu}_{\tau},\boldsymbol{\Sigma}_{\tau}),\label{eq:canonical_prior}
\end{equation}
where
\begin{equation}
\boldsymbol{\mu}_{\tau}=\boldsymbol{\Lambda}^{-1}_{\tau}\boldsymbol{\eta}_{\tau},\hspace*{0.2cm}\boldsymbol{\Sigma}_{\tau}=\boldsymbol{\Lambda}^{-1}_{\tau}.\label{eq:AOD_model-1-1}
\end{equation}
Here, $\boldsymbol{\mu}_{\tau}\in\mathbb{R}^{Md}$ and $\boldsymbol{\Sigma}_{\tau}\in\mathbb{R}^{Md\times Md}$
are the mean and covariance induced by the fixed graph factors.
\end{prop}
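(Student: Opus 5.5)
The plan is to expand the log of the unnormalized belief in (\ref{eq:graph_prior}) as a quadratic form in $\boldsymbol{\xi}$, read off its precision matrix and linear coefficient, and then complete the square. Substituting $\mathbf{x}_i=\mathbf{E}_i\boldsymbol{\xi}$ into the node factor (\ref{eq:graph_node}) gives
\[
-\tfrac{1}{2}(\mathbf{E}_i\boldsymbol{\xi}-\mathbf{m}_i)^{\mathsf{T}}\boldsymbol{\Omega}_i(\mathbf{E}_i\boldsymbol{\xi}-\mathbf{m}_i)
=-\tfrac{1}{2}\boldsymbol{\xi}^{\mathsf{T}}\mathbf{E}_i^{\mathsf{T}}\boldsymbol{\Omega}_i\mathbf{E}_i\boldsymbol{\xi}+\boldsymbol{\xi}^{\mathsf{T}}\mathbf{E}_i^{\mathsf{T}}\boldsymbol{\Omega}_i\mathbf{m}_i+\text{const}.
\]
For the edge factor (\ref{eq:graph_edge}), we have $\mathbf{E}_i\boldsymbol{\xi}-\mathbf{M}_{ij}\mathbf{E}_j\boldsymbol{\xi}=\mathbf{A}_{ij}\boldsymbol{\xi}$, so its exponent is $-\tfrac{1}{2}\boldsymbol{\xi}^{\mathsf{T}}\mathbf{A}_{ij}^{\mathsf{T}}\mathbf{W}_{ij}\mathbf{A}_{ij}\boldsymbol{\xi}$, with no linear term. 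Summing all exponents gives
\[
\log q_\tau(\boldsymbol{\xi})=-\tfrac{1}{2}\boldsymbol{\xi}^{\mathsf{T}}\boldsymbol{\Lambda}_\tau\boldsymbol{\xi}+\boldsymbol{\xi}^{\mathsf{T}}\boldsymbol{\eta}_\tau+\text{const},
\]
with $\boldsymbol{\Lambda}_\tau$ and $\boldsymbol{\eta}_\tau$ exactly as in (\ref{eq:Lambda_g}) and (\ref{eq:Lambda_g-1}). If $\mathbf{W}_{ij}$ is not symmetric, I would first replace it by its symmetric part; this leaves the quadratic form unchanged and keeps $\boldsymbol{\Lambda}_\tau$ symmetric. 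The same remark applies to $\boldsymbol{\Omega}_i$.

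The key step is positive definiteness of $\boldsymbol{\Lambda}_\tau$. Each edge term satisfies $\boldsymbol{\xi}^{\mathsf{T}}\mathbf{A}_{ij}^{\mathsf{T}}\mathbf{W}_{ij}\mathbf{A}_{ij}\boldsymbol{\xi}=(\mathbf{A}_{ij}\boldsymbol{\xi})^{\mathsf{T}}\mathbf{W}_{ij}(\mathbf{A}_{ij}\boldsymbol{\xi})\ge 0$, since $\mathbf{W}_{ij}\succeq\mathbf{0}$. For the node terms, the $\mathbf{E}_i$ select disjoint $d$-blocks of $\boldsymbol{\xi}$, so
\[
\sum_i\mathbf{E}_i^{\mathsf{T}}\boldsymbol{\Omega}_i\mathbf{E}_i=\mathrm{blkdiag}(\boldsymbol{\Omega}_1,\dots,\boldsymbol{\Omega}_M).
\]
This is positive definite because every $\boldsymbol{\Omega}_i\succ\mathbf{0}$. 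Concretely, $\boldsymbol{\xi}^{\mathsf{T}}(\cdot)\boldsymbol{\xi}=\sum_i\mathbf{x}_i^{\mathsf{T}}\boldsymbol{\Omega}_i\mathbf{x}_i>0$ whenever $\boldsymbol{\xi}\neq\mathbf{0}$, since some $\mathbf{x}_i\neq\mathbf{0}$. A positive definite matrix plus positive semidefinite ones is positive definite, hence $\boldsymbol{\Lambda}_\tau\succ\mathbf{0}$ and in particular it is invertible. This step is elementary; the only care needed is the block-diagonal identity and the symmetrization remark above.

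Finally, complete the square:
\[
-\tfrac{1}{2}\boldsymbol{\xi}^{\mathsf{T}}\boldsymbol{\Lambda}_\tau\boldsymbol{\xi}+\boldsymbol{\xi}^{\mathsf{T}}\boldsymbol{\eta}_\tau=-\tfrac{1}{2}(\boldsymbol{\xi}-\boldsymbol{\Lambda}_\tau^{-1}\boldsymbol{\eta}_\tau)^{\mathsf{T}}\boldsymbol{\Lambda}_\tau(\boldsymbol{\xi}-\boldsymbol{\Lambda}_\tau^{-1}\boldsymbol{\eta}_\tau)+\text{const}.
\]
Because $\boldsymbol{\Lambda}_\tau\succ\mathbf{0}$, the Gaussian integral $\int\exp(\cdot)\,d\boldsymbol{\xi}=(2\pi)^{Md/2}\det(\boldsymbol{\Lambda}_\tau)^{-1/2}$ is finite. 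Hence the product in (\ref{eq:graph_prior}) is normalizable, and after normalization it equals $\mathcal{N}(\boldsymbol{\mu}_\tau,\boldsymbol{\Sigma}_\tau)$ with $\boldsymbol{\mu}_\tau=\boldsymbol{\Lambda}_\tau^{-1}\boldsymbol{\eta}_\tau$ and $\boldsymbol{\Sigma}_\tau=\boldsymbol{\Lambda}_\tau^{-1}$, which is (\ref{eq:canonical_prior})–(\ref{eq:AOD_model-1-1}). Properness is where the main attention goes: without $\boldsymbol{\Omega}_i\succ\mathbf{0}$, the edge terms alone are only semidefinite (for example, $\mathbf{M}_{ij}=\mathbf{I}$ annihilates constant fields), and the belief could fail to be normalizable.
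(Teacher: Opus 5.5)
Your proposal is correct and follows essentially the same route as the paper's proof. You expand the log-factors into $-\tfrac12\boldsymbol{\xi}^{\mathsf{T}}\boldsymbol{\Lambda}_\tau\boldsymbol{\xi}+\boldsymbol{\eta}_\tau^{\mathsf{T}}\boldsymbol{\xi}$, obtain $\boldsymbol{\Lambda}_\tau\succ\mathbf{0}$ from the strictly positive node blocks plus nonnegative edge terms, and complete the square; your extra remarks on the block-diagonal identity, symmetrization, and finiteness of the normalizing integral are valid refinements of that argument.
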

\begin{proof}
See Appendix~\ref{app:Proposition 1 proof}.
\end{proof}
Proposition~\ref{thm:Proposed1} connects factor learning to a valid
full-map belief. The first term in $\boldsymbol{\Lambda}_{\tau}$
encodes node-wise confidence, while the second introduces cross-cell
coupling. Although $\boldsymbol{\Lambda}_{\tau}$ is sparse, its inverse
is generally dense within connected components. The diagonal block
$\boldsymbol{\Sigma}_{\tau,ii}$ quantify cell-wise uncertainty, whereas
the off-diagonal blocks $\boldsymbol{\Sigma}_{\tau,ij}$ encode cross-location
dependencies, which are the key quantities governing both measurement
propagation and sensing value.

\subsection{Measurement Responses and Multi-step Sensing Value\label{subsec:Bayesian-Posterior-Update}}

Active radio map construction requires each new measurements to be
immediately incorporated into both the map estimate and its uncertainty.
Conventional neural estimators, however, typically lack an explicit
probabilistic mechanism for incremental conditioning and map-wide
uncertainty propagation. We condition the learned graph-induced belief
on the acquired measurements via analytical Bayesian updates and use
the same posterior response to evaluate candidate value.

Let $\boldsymbol{\mu}_{t}$ and $\boldsymbol{\Sigma}_{t}$ be the
mean and covariance at sensing step $t$. For a new observation $\mathbf{y}_{t+1}$
at cell $j$, define $\mathbf{C}_{t,j}=\boldsymbol{\Sigma}_{t}\mathbf{E}^{\mathsf{T}}_{j}$
and $\mathbf{S}_{t,j}=\mathbf{E}_{j}\mathbf{C}_{t,j}+\mathbf{R}_{j}$.
Gaussian conditioning gives
\begin{equation}
\begin{aligned}\boldsymbol{\mu}^{+}_{t} & =\boldsymbol{\mu}_{t}+\mathbf{C}_{t,j}\mathbf{S}_{t,j}{}^{-1}\bigl(\mathbf{y}_{t+1}-\mathbf{E}_{j}\boldsymbol{\mu}_{t}\bigr),\\
\boldsymbol{\Sigma}^{+}_{t} & =\boldsymbol{\Sigma}_{t}-\mathbf{C}_{t,j}\mathbf{S}_{t,j}{}^{-1}\mathbf{C}^{\mathsf{T}}_{t,j}.
\end{aligned}
\label{eq:update}
\end{equation}
The superscript $+$ denotes conditioning on the new observation while
retaining the current graph factors. Each action adds a localized
positive information block without recomputing the graph factors at
every sensing step.

To quantify the map-wide sensing value, let $\mathbf{W}\in\mathbb{R}^{Md\times Md}$
be a positive-semidefinite task-weight matrix satisfying trace $\mathrm{tr}(\mathbf{W})=1$.
The map-wise posterior Bayes risk is
\begin{equation}
\mathcal{R}_{t}=\mathbb{E}_{q_{t}}\left[\|\boldsymbol{\xi}-\boldsymbol{\mu}_{t}\|^{2}_{\mathbf{W}}\right]=\mathrm{tr}(\mathbf{W}\boldsymbol{\Sigma}_{t}),
\end{equation}
where $\|\mathbf{z}\|^{2}_{\mathbf{W}}=\mathbf{z}^{\mathsf{T}}\mathbf{W}\mathbf{z}$.
A common choice $\mathbf{W}=\mathbf{I}_{Md}/(Md)$ gives average squared
error per cell and feature component.

The one-step risk reduction obtained by measuring cell $j$ is
\begin{equation}
\begin{aligned}\Delta_{t}(j) & =\mathcal{R}_{t}-\mathcal{R}^{+}_{t}(j)=\mathrm{tr}(\mathbf{S}_{t,j}{}^{-1}\mathbf{C}^{\mathsf{T}}_{t,j}\mathbf{W}\mathbf{C}_{t,j})\geq0\end{aligned}
.\label{eq:one-step Bayes risk}
\end{equation}
It follows the same posterior response used for belief refinement.
However, selecting sensing locations solely by these immediate gains
may be myopic. We therefore jointly evaluate prospective measurements
along feasible multi-step routes.

At planning step $t$, consider a hypothetical feasible route $P_{t}=(j_{t,1},\ldots,j_{t,h_{t}})$
with $h_{t}$ prospective locations. Let $P_{t,1:k}$ denotes its
first $k$ locations and $\boldsymbol{\Sigma}^{P_{t,1:k}}_{t}$ the
corresponding counterfactual posterior covariance. For an unvisited
cell $j$, its branch-conditioned one-step gain is
\begin{equation}
\Delta_{t}(j\mid P_{t,1:k})=\mathrm{tr}((\mathbf{S}^{P_{t,1:k}}_{t,j}){}^{-1}(\mathbf{C}^{P_{t,1:k}}_{t,j})^{\mathsf{T}}\mathbf{W}\mathbf{C}^{P_{t,1:k}}_{t,j}).
\end{equation}
The prospective terminal route value consequently satisfies
\begin{equation}
J_{t}(P_{t})=\mathcal{R}_{t}-\mathrm{tr}(\mathbf{W}\boldsymbol{\Sigma}^{P_{t}}_{t})=\sum^{h_{t}}_{k=1}\Delta_{t}(j_{k}\mid P_{t,1:k-1}).\label{eq:value}
\end{equation}
The equality follows by telescoping successive risk differences. Each
measurement is valued after accounting for the information already
planned on that branch.

Let $\mathbf{p}^{\mathrm{u}}_{t}$ be the current location and $B_{t}=(\mathit{\Gamma}_{t},\ell_{t})$
the remaining budgets. Let $\mathcal{P}_{t}=\mathcal{P}(\mathcal{A}_{t},\mathbf{p}^{\mathrm{u}}_{t},B_{t})$
be the feasible routes set under the constraints. The multi-step sensing
objective is
\begin{equation}
P^{\star}_{t}\in\arg\max_{P\in\mathcal{P}_{t}}J_{t}(P_{t}).\label{eq:long-horizon}
\end{equation}
This objective selects measurements jointly according to their terminal
map-wide improvement, rather than independently maximizing immediate
gains. A smaller immediate gain can be preferred when it enables a
more informative feasible batch.

Let $\widehat{P}_{t}=(\widehat{j}_{t,1},\ldots,\widehat{j}_{t,h_{t}})$
be the final route returned by the planner. All selected locations
are visited in order:
\[
i_{t}=\widehat{j}_{t,h_{t}},\hspace*{0.1cm}\mathcal{D}_{t+h_{t}}=\mathcal{D}_{t}\cup\left\{ (i_{t+k},\mathbf{y}_{t+k}):k=1,\ldots,h_{t}\right\} .
\]
The current location is updated to the route endpoint. This planning
thus enables nonmyopic sensing by considering the cumulative information
gain among multi-step future choices.

\section{Decision-Certified Multi-Step Sensing}

Section~\ref{sec:Graph-Structured-Belief} defines the route value
over a multi-step lookahead horizon. Finding a high-value route remains
computationally demanding due to the combinatorial growth of candidate
routes and the costly posterior-response solves. Uniformly accurate
response solves may waste computation on comparisons that are already
decisive, whereas insufficient accuracy can reverse route rankings
and invalidate pruning.

We address these through shared posterior responses and decision-level
error control. Candidate responses are reused, and their residuals
yield bounds directly on route-value differences. A continuation bound
then excludes uncompetitive prefixes, after which numerical refinement
focuses on unresolved comparisons. Throughout a planning call, the
graph belief, task weights, noise model, and candidate domain remain
fixed.

\subsection{Shared Conditional-Response Representation}

Two routes visiting the same locations with different orders have
the same terminal value. Even routes with different measurement sets
can reuse their common candidate responses.

For a feasible route $P$, stack its block-selection matrices into
$\mathbf{E}_{P}\in\mathbb{R}^{|P|d\times Md}$, and define
\begin{equation}
\mathbf{C}^{P}_{t}=\boldsymbol{\Sigma}_{t}\mathbf{E}^{\mathsf{T}}_{P},\hspace*{0.2cm}\mathbf{S}^{P}_{t}=\mathbf{E}_{P}\mathbf{C}^{P}_{t}+\mathbf{R}_{P},
\end{equation}
where $\mathbf{R}_{P}=\mathrm{blkdiag}(\mathbf{R}_{j}:j\in P)$ is
the block-diagonal noise matrix. Gaussian conditioning gives
\begin{equation}
J_{t}(P_{t})=\mathrm{tr}((\mathbf{S}^{P_{t}}_{t})^{-1}\mathbf{G}^{P_{t}}_{t}),\hspace*{0.2cm}\mathbf{G}^{P_{t}}_{t}=(\mathbf{C}^{P_{t}}_{t})^{\mathsf{T}}\mathbf{W}\mathbf{C}^{P_{t}}_{t}.\label{eq:full-value}
\end{equation}
The candidate responses $\mathbf{C}_{t,j}$ and Gram blocks $\mathbf{G}_{t,ij}=\mathbf{C}^{\mathsf{T}}_{t,i}\mathbf{W}\mathbf{C}_{t,j}$
are computed once and reused across route evaluations. Routes visiting
the same candidate set have identical terminal values because reordering
does not change the trace.

The same representation supports conditional queries. For a prefix
$P$ and a candidate $j\notin P$,
\begin{equation}
\mathbf{C}^{P}_{t,j}=\mathbf{C}_{t,j}-\mathbf{C}^{P}_{t}\left(\mathbf{S}^{P}_{t}\right)^{-1}\mathbf{E}_{P}\mathbf{C}_{t,j}.
\end{equation}
The second term removes the response component already explained by
the planned prefix. Different prefixes therefore require only small
conditional updates rather than independent full-map conditioning.
Together, the order-invariance property and shared response representation
avoid repeated full-map response solves and duplicate value evaluations.

\subsection{Residual-Certified Decision Bounds\label{subsec:Residual-Based-Decision-Bounds}}

The shared representation reduces route evaluation to small matrix
operations. However, approximate response solves perturb both the
Gram blocks and the inverse innovation matrices. We first bound these
effects, and then combine shared Gram errors before applying norm
bounds to competing routes.

Let $\widehat{\mathbf{C}}_{t,j}$ approximate the response satisfying
$\boldsymbol{\Lambda}_{t}\mathbf{C}_{t,j}=\mathbf{E}^{\mathsf{T}}_{j}$.
For a verified bound $\boldsymbol{\Lambda}_{t}\succeq\alpha_{t}\mathbf{I}_{Md}$,
with $\alpha_{t}>0$, define
\begin{equation}
\epsilon_{t,j}=\frac{||\mathbf{E}^{\mathsf{T}}_{j}-\boldsymbol{\Lambda}_{t}\widehat{\mathbf{C}}_{t,j}||_{F}}{\alpha_{t}},\hspace*{0.2cm}\epsilon_{t}(P)=(\sum_{j\in P}\epsilon^{2}_{t,j})^{1/2},
\end{equation}
where $||\cdot||_{F}$ is the Frobenius norm. These residual-based
quantities bound the errors of individual and stacked response blocks.
Because approximate responses need not induce a symmetric positive-semidefinite
signal covariance, define
\[
\widehat{\mathbf{S}}^{P}_{t}=\mathrm{sym}(\mathbf{E}_{P}\widehat{\mathbf{C}}^{P}_{t})+\mathbf{R}_{P}+\gamma_{t}\mathbf{I},
\]
where $\mathrm{sym}(\mathbf{A})=(\mathbf{A}+\mathbf{A}^{\mathsf{T}})/2$,
and $\gamma_{t}\ge0$ is a correction to make the numerical covariance
positive semidefinite, yielding $\widehat{\mathbf{S}}^{P}_{t}\succeq\mathbf{R}_{P}\succ\mathbf{0}$
for every considered route $P$.

Let $\widehat{J}_{t}(P)$ be the approximate route value obtained
by numerical solvers. Its error admits the decomposition
\begin{equation}
\begin{aligned}J_{t}(P)-\widehat{J}_{t}(P)= & \underbrace{\mathrm{tr}((\widehat{\mathbf{S}}^{P}_{t})^{-1}(\mathbf{G}^{P}_{t}-\widehat{\mathbf{G}}^{P}_{t}))}_{\text{Gram error}}\\
 & +\underbrace{\mathrm{tr}(((\mathbf{S}^{P}_{t})^{-1}-(\widehat{\mathbf{S}}^{P}_{t})^{-1})\mathbf{G}^{P}_{t}).}_{\text{innovation-inverse error}}
\end{aligned}
\label{eq:decomposition}
\end{equation}
It separates the two error sources and allows shared Gram errors to
be combined across competing routes before bounding, yielding tighter
route-value comparison intervals.
\begin{prop}
\label{prop:Residual-based-conditioning}(Residual-Based Route Error
Bounds). Suppose that $\boldsymbol{\Lambda}_{t}\succeq\alpha_{t}\mathbf{I}_{Md}$,
\textup{$\widehat{\mathbf{S}}^{P}_{t}\succeq\mathbf{R}_{P}\succ\mathbf{0}$.}
\textup{Let $s_{P}$, $\widehat{s}_{P}$} be valid positive lower
bounds on the smallest eigenvalues of \textup{$\mathbf{S}^{P}_{t}$,
$\widehat{\mathbf{S}}^{P}_{t}$}, respectively, and $\omega=\sqrt{||\mathbf{W}||{}_{2}}$.
Consequently, \textup{
\begin{equation}
\|(\mathbf{S}^{P}_{t})^{-1}-(\widehat{\mathbf{S}}^{P}_{t})^{-1}\|_{2}\leq\frac{\epsilon_{t}(P)+\gamma_{t}}{s_{P}\widehat{s}_{P}},\label{eq:conditioning stability}
\end{equation}
}where $\|\cdot\|_{2}$ is the spectral norm. Furthermore, the Gram-block
errors and the innovation-inverse error satisfy
\begin{equation}
\begin{aligned}\beta_{t,ij} & \geq\|\mathbf{G}_{t,ij}-\widehat{\mathbf{G}}_{t,ij}\|_{F},\\
\kappa_{t}(P) & \geq\left|\mathrm{tr}(((\mathbf{S}^{P}_{t})^{-1}-(\widehat{\mathbf{S}}^{P}_{t})^{-1})\mathbf{G}^{P}_{t})\right|,
\end{aligned}
\label{eq:Two_bounded}
\end{equation}
\textup{where }
\begin{equation}
\begin{aligned}\beta_{t,ij} & =\omega(\epsilon_{t,i}\|\mathbf{W}^{1/2}\widehat{\mathbf{C}}_{t,j}\|_{F}+\epsilon_{t,j}\|\mathbf{W}^{1/2}\widehat{\mathbf{C}}_{t,i}\|_{F})\\
 & \hspace*{0.4cm}+\omega^{2}\epsilon_{t,i}\epsilon_{t,j},\\
\kappa_{t}(P) & =\frac{\epsilon_{t}(P)+\gamma_{t}}{s_{P}\widehat{s}_{P}}(\|\mathbf{W}^{1/2}\widehat{\mathbf{C}}^{P}_{t}\|_{F}+\omega\epsilon_{t}(P)){}^{2}.
\end{aligned}
\label{eq:Two_bounded-1}
\end{equation}
\end{prop}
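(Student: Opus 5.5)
We will derive everything from one elementary fact: the response error is controlled by the residual. Write $\mathbf{D}_{t,j}=\widehat{\mathbf{C}}_{t,j}-\mathbf{C}_{t,j}$. Since $\boldsymbol{\Lambda}_{t}\mathbf{C}_{t,j}=\mathbf{E}^{\mathsf{T}}_{j}$, we have $\boldsymbol{\Lambda}_{t}\mathbf{D}_{t,j}=-(\mathbf{E}^{\mathsf{T}}_{j}-\boldsymbol{\Lambda}_{t}\widehat{\mathbf{C}}_{t,j})$. The hypothesis $\boldsymbol{\Lambda}_{t}\succeq\alpha_{t}\mathbf{I}$ gives $\|\boldsymbol{\Lambda}^{-1}_{t}\|_{2}\le1/\alpha_{t}$, hence $\|\mathbf{D}_{t,j}\|_{F}\le\epsilon_{t,j}$. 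Stacking the columns gives $\|\mathbf{D}^{P}_{t}\|_{F}\le\epsilon_{t}(P)$ for $\mathbf{D}^{P}_{t}=\widehat{\mathbf{C}}^{P}_{t}-\mathbf{C}^{P}_{t}$.

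\textbf{Step 1: the perturbation of $\mathbf{S}^{P}_{t}$, giving (\ref{eq:conditioning stability}).} The exact matrix $\mathbf{E}_{P}\mathbf{C}^{P}_{t}=\mathbf{E}_{P}\boldsymbol{\Sigma}_{t}\mathbf{E}^{\mathsf{T}}_{P}$ is symmetric, so
\[
\widehat{\mathbf{S}}^{P}_{t}-\mathbf{S}^{P}_{t}=\mathrm{sym}(\mathbf{E}_{P}\mathbf{D}^{P}_{t})+\gamma_{t}\mathbf{I}.
\]
The map $\mathrm{sym}$ is nonexpansive, $\|\mathbf{E}_{P}\|_{2}=1$, and the spectral norm is dominated by the Frobenius norm. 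Therefore $\|\widehat{\mathbf{S}}^{P}_{t}-\mathbf{S}^{P}_{t}\|_{2}\le\epsilon_{t}(P)+\gamma_{t}$. We then use the identity
\[
\mathbf{S}^{-1}-\widehat{\mathbf{S}}^{-1}=\mathbf{S}^{-1}(\widehat{\mathbf{S}}-\mathbf{S})\widehat{\mathbf{S}}^{-1},
\]
together with $\|(\mathbf{S}^{P}_{t})^{-1}\|_{2}\le1/s_{P}$ and $\|(\widehat{\mathbf{S}}^{P}_{t})^{-1}\|_{2}\le1/\widehat{s}_{P}$. Both matrices are symmetric positive definite, so these norms are controlled by the stated eigenvalue lower bounds, and (\ref{eq:conditioning stability}) follows.

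\textbf{Step 2: the Gram-block bound $\beta_{t,ij}$.} Expand
\[
\mathbf{G}_{t,ij}-\widehat{\mathbf{G}}_{t,ij}=-\mathbf{D}^{\mathsf{T}}_{t,i}\mathbf{W}\widehat{\mathbf{C}}_{t,j}-\widehat{\mathbf{C}}^{\mathsf{T}}_{t,i}\mathbf{W}\mathbf{D}_{t,j}+\mathbf{D}^{\mathsf{T}}_{t,i}\mathbf{W}\mathbf{D}_{t,j}.
\]
This follows by substituting $\mathbf{C}=\widehat{\mathbf{C}}-\mathbf{D}$ into $\mathbf{C}^{\mathsf{T}}_{i}\mathbf{W}\mathbf{C}_{j}$. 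Split $\mathbf{W}=\mathbf{W}^{1/2}\mathbf{W}^{1/2}$ and use $\|\mathbf{A}\mathbf{B}\|_{F}\le\|\mathbf{A}\|_{2}\|\mathbf{B}\|_{F}$ with $\|\mathbf{W}^{1/2}\mathbf{D}_{t,i}\|_{2}\le\omega\epsilon_{t,i}$. The first term is then at most $\omega\epsilon_{t,i}\|\mathbf{W}^{1/2}\widehat{\mathbf{C}}_{t,j}\|_{F}$, and the second is bounded symmetrically. The cross term is at most $\omega^{2}\epsilon_{t,i}\epsilon_{t,j}$, which yields $\beta_{t,ij}$.

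\textbf{Step 3: the innovation-inverse bound $\kappa_{t}(P)$.} Write $\mathbf{G}^{P}_{t}=\mathbf{B}^{\mathsf{T}}\mathbf{B}$ with $\mathbf{B}=\mathbf{W}^{1/2}\mathbf{C}^{P}_{t}$, which is positive semidefinite. The trace inequality $|\mathrm{tr}(\mathbf{X}\mathbf{B}^{\mathsf{T}}\mathbf{B})|\le\|\mathbf{X}\|_{2}\|\mathbf{B}\|_{F}^{2}$ holds, for instance via $|\mathrm{tr}(\mathbf{B}\mathbf{X}\mathbf{B}^{\mathsf{T}})|\le\|\mathbf{X}\|_{2}\mathrm{tr}(\mathbf{B}\mathbf{B}^{\mathsf{T}})$. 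The exact $\mathbf{B}$ is unavailable, so we bound it through the computed response: by the triangle inequality,
\[
\|\mathbf{B}\|_{F}\le\|\mathbf{W}^{1/2}\widehat{\mathbf{C}}^{P}_{t}\|_{F}+\|\mathbf{W}^{1/2}\mathbf{D}^{P}_{t}\|_{F}\le\|\mathbf{W}^{1/2}\widehat{\mathbf{C}}^{P}_{t}\|_{F}+\omega\epsilon_{t}(P).
\]
Combining this with Step 1 gives $\kappa_{t}(P)$.

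The main obstacle is Step 1. We must confirm that the symmetrization and the $\gamma_{t}$ shift enter only additively, and that the stated $s_{P}$ and $\widehat{s}_{P}$ indeed control the inverse norms; the latter requires symmetric positive definiteness, which is ensured by $\widehat{\mathbf{S}}^{P}_{t}\succeq\mathbf{R}_{P}\succ\mathbf{0}$ and by $\mathbf{S}^{P}_{t}$ being a covariance plus noise. The remaining steps are routine norm manipulations. The key care point there is expressing every bound in terms of computable quantities ($\widehat{\mathbf{C}}$ and the residuals) rather than the unknown exact responses.
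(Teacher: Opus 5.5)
Your proposal is correct and follows essentially the same route as the paper's proof. Both arguments use the residual-to-response bound via $\|\boldsymbol{\Lambda}^{-1}_{t}\|_{2}\le1/\alpha_{t}$, the bound $\|\widehat{\mathbf{S}}^{P}_{t}-\mathbf{S}^{P}_{t}\|_{2}\le\epsilon_{t}(P)+\gamma_{t}$ with the identity $\mathbf{S}^{-1}-\widehat{\mathbf{S}}^{-1}=\mathbf{S}^{-1}(\widehat{\mathbf{S}}-\mathbf{S})\widehat{\mathbf{S}}^{-1}$, the three-term Gram expansion, and the trace bound against $\mathrm{tr}(\mathbf{G}^{P}_{t})=\|\mathbf{W}^{1/2}\mathbf{C}^{P}_{t}\|^{2}_{F}$ plus a triangle inequality. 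The differences are cosmetic: you use the opposite sign for $\mathbf{D}_{t,j}$ and spectral-times-Frobenius rather than Frobenius-times-Frobenius submultiplicativity in the Gram step, and you obtain the same $\beta_{t,ij}$ and $\kappa_{t}(P)$.
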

\begin{proof}
See Appendix~\ref{app:Proposition 2 proof}.
\end{proof}
Proposition~\ref{prop:Residual-based-conditioning} provides residual-certified
bounds for the two error sources in numerical route valuation. These
bounds are combined next to directly certify route-value differences.
\begin{cor}
\label{lem:(Shared-error-route-comparisons)}(Shared-error route comparisons).
Under Proposition~\ref{prop:Residual-based-conditioning}, suppose
that both routes $P$ and $Q$ use the same shared approximate response
and Gram blocks. Define
\begin{equation}
\begin{aligned}e_{t}(P,Q)= & \sum_{i,j\in P\cup Q}\left\Vert [(\widehat{\mathbf{S}}^{P}_{t})^{-1}]_{ij}-[(\widehat{\mathbf{S}}^{Q}_{t})^{-1}]_{ij}\right\Vert _{F}\beta_{t,ij}\\
 & +\kappa_{t}(P)+\kappa_{t}(Q),
\end{aligned}
\end{equation}
where $P\cup Q$ denotes the union of two routes, $[(\widehat{\mathbf{S}}^{P}_{t})^{-1}]{}_{ij}$
is the $d\times d$ block associated with locations $i$ and $j$,
and is interpreted as zero if either location is absent from $P$.
Then
\begin{equation}
\left|J_{t}(P)-J_{t}(Q)-\left(\widehat{J}_{t}(P)-\widehat{J}_{t}(Q)\right)\right|\le e_{t}(P,Q).
\end{equation}
\end{cor}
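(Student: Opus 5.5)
The plan is to apply the error decomposition (\ref{eq:decomposition}) to $P$ and to $Q$ separately and subtract. This gives
\[
J_t(P)-J_t(Q)-\bigl(\widehat J_t(P)-\widehat J_t(Q)\bigr)=T_{\mathrm{G}}+\bigl(K(P)-K(Q)\bigr),
\]
where $T_{\mathrm{G}}$ is the difference of the two Gram-error terms and $K(\cdot)$ denotes the innovation-inverse error term. The triangle inequality then lets me bound $|T_{\mathrm{G}}|$ and $|K(P)|+|K(Q)|$ separately. By the second inequality in (\ref{eq:Two_bounded}) of Proposition~\ref{prop:Residual-based-conditioning}, $|K(P)|\le\kappa_t(P)$ and $|K(Q)|\le\kappa_t(Q)$. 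These account for the last two terms of $e_t(P,Q)$.

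The substantive step is $T_{\mathrm{G}}$, where sharing the Gram errors pays off. Let $\mathbf{\Delta}_{ij}=\mathbf{G}_{t,ij}-\widehat{\mathbf{G}}_{t,ij}$ for $i,j\in P\cup Q$. Because both routes use the same shared approximate responses and Gram blocks, $\mathbf{G}^P_t-\widehat{\mathbf{G}}^P_t$ is the principal submatrix of the common block array $[\mathbf{\Delta}_{ij}]_{i,j\in P\cup Q}$ indexed by $P$, and likewise for $Q$. Expanding each trace block-wise gives
\[
\mathrm{tr}\bigl((\widehat{\mathbf{S}}^P_t)^{-1}(\mathbf{G}^P_t-\widehat{\mathbf{G}}^P_t)\bigr)=\sum_{i,j\in P}\mathrm{tr}\bigl([(\widehat{\mathbf{S}}^P_t)^{-1}]_{ij}\,\mathbf{\Delta}_{ji}\bigr).
\]
Adopting the convention that $[(\widehat{\mathbf{S}}^P_t)^{-1}]_{ij}=\mathbf{0}$ when $i$ or $j\notin P$, both sums extend over $P\cup Q$. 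Subtracting them yields
\[
T_{\mathrm{G}}=\sum_{i,j\in P\cup Q}\mathrm{tr}\bigl(\bigl([(\widehat{\mathbf{S}}^P_t)^{-1}]_{ij}-[(\widehat{\mathbf{S}}^Q_t)^{-1}]_{ij}\bigr)\mathbf{\Delta}_{ji}\bigr).
\]
Each summand is then bounded by Cauchy--Schwarz for the Frobenius inner product, $|\mathrm{tr}(\mathbf{A}\mathbf{B})|\le\|\mathbf{A}\|_F\|\mathbf{B}\|_F$, together with $\|\mathbf{\Delta}_{ji}\|_F\le\beta_{t,ji}$ from (\ref{eq:Two_bounded}).

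The main obstacle is an index mismatch. The trace pairs block $(i,j)$ of the inverse with block $(j,i)$ of the Gram error, whereas $e_t(P,Q)$ pairs $(i,j)$ with $\beta_{t,ij}$. I would close this gap with symmetry. The exact Gram matrix is symmetric, $\mathbf{G}_{t,ji}=\mathbf{G}_{t,ij}^{\mathsf T}$, and I take the shared approximate Gram blocks to be formed the same way, $\widehat{\mathbf{G}}_{t,ji}=\widehat{\mathbf{C}}^{\mathsf T}_{t,j}\mathbf{W}\widehat{\mathbf{C}}_{t,i}=\widehat{\mathbf{G}}_{t,ij}^{\mathsf T}$. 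Hence $\mathbf{\Delta}_{ji}=\mathbf{\Delta}_{ij}^{\mathsf T}$, and the Frobenius norm is transpose-invariant, so $\|\mathbf{\Delta}_{ji}\|_F=\|\mathbf{\Delta}_{ij}\|_F\le\beta_{t,ij}$. This agrees with the formula for $\beta_{t,ij}$, which is symmetric in $(i,j)$. As a cross-check, $\widehat{\mathbf{S}}^P_t$ is symmetric by construction, so its inverse blocks also satisfy $[\cdot]_{ji}=[\cdot]_{ij}^{\mathsf T}$; either symmetry suffices to relabel the sum.

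Summing the bounds on $|T_{\mathrm{G}}|$, $|K(P)|$ and $|K(Q)|$ gives exactly $e_t(P,Q)$. The only point needing care is that the innovation-inverse bounds $\kappa_t$ are applied route by route, so no shared structure is claimed for them.
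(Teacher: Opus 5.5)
Your proof is correct and follows the paper's own argument. You subtract the decomposition (\ref{eq:decomposition}) for $P$ and $Q$, pool the shared Gram-block errors over $P\cup Q$ using the zero-block convention, and finish with the triangle and Cauchy--Schwarz inequalities, applying $\beta_{t,ij}$ blockwise and $\kappa_t$ route by route as given in Proposition~\ref{prop:Residual-based-conditioning}. The only cosmetic difference is that the paper writes each term directly as the Frobenius inner product $\langle[(\widehat{\mathbf{S}}^{P}_{t})^{-1}]_{ij}-[(\widehat{\mathbf{S}}^{Q}_{t})^{-1}]_{ij},\,\mathbf{G}_{t,ij}-\widehat{\mathbf{G}}_{t,ij}\rangle$, which implicitly relies on the symmetry of $(\widehat{\mathbf{S}}^{P}_{t})^{-1}$, whereas you handle the transpose explicitly; since $\beta_{t,ji}=\beta_{t,ij}$ follows from its formula, that step is immediate either way.
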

\begin{proof}
See Appendix~\ref{Proof of Corollary 1}.
\end{proof}
Corollary~\ref{lem:(Shared-error-route-comparisons)} shows that
route preferences can be certified by directly bounding their value
differences, without requiring equally accurate estimates of individual
route values. Similar inverse-innovation coefficients can yield a
reliable comparison even when the individual route values remain less
accurately resolved. Define the decision interval by
\begin{equation}
\begin{aligned}[][\underline{\Delta}_{t}(P,Q),\,\overline{\Delta}_{t}(P,Q)] & =[\widehat{J}_{t}(P)-\widehat{J}_{t}(Q)\pm e_{t}(P,Q)].\end{aligned}
\label{eq:interval}
\end{equation}
Thus, a positive lower bound certifies that $P$ improves upon $Q$;
a nonpositive upper bound certifies that it cannot improve. Otherwise,
the comparison remains unresolved and requires further numerical response
refinement. Refinement updates until the decision is resolved or the
stopping criterion is met.

\subsection{Prefix Screening and Decision Refinement}

The preceding intervals compare two routes under specified numerical
accuracy. Prefix screening requires another easily acquired bound
covering every feasible continuation, including routes not yet generated.
We obtain this bound from conditional response errors and then separate
geometric exploration from subsequent numerical refinement.

A feasible greedy route $Q^{0}_{t}$ is constructed as a fixed reference
for screening. For a nonempty prefix $P$ and $j\notin P$, let $\delta_{t,j}(P)$
be a computable bound (see Appendix~\ref{Conditional-Response Error})
satisfying $||\mathbf{W}^{1/2}(\mathbf{C}^{P}_{t,j}-\widehat{\mathbf{C}}^{P}_{t,j})||_{F}\le\delta_{t,j}(P)$.
Define
\begin{equation}
\overline{b}_{t,j}(P)=\frac{(\|\mathbf{W}^{1/2}\widehat{\mathbf{C}}^{P}_{t,j}\|_{F}+\delta_{t,j}(P))^{2}}{\lambda_{\min}(\mathbf{R}_{j})}
\end{equation}
where $\lambda_{\min}(\cdot)$ denotes the smallest eigenvalue.
\begin{lem}
\label{lem:(Error-aware-continuation-bound)}(Error-aware continuation
bound). Let $\mathcal{A}_{t}(P)$ contain every location that may
appear in a feasible continuation of $P$, and let $r_{P}=h_{t}-|P|$.
For any feasible continuation $T$,
\begin{equation}
\adjustbox{max width=.88\linewidth}{\ensuremath{J_{t}(P\oplus T)-J_{t}(P)\leq\operatorname{TopSum}_{r_{P}}\{\overline{b}_{t,j}(P):j\in\mathcal{A}_{t}(P)\},}}\label{eq:graph_edge-1}
\end{equation}
where $\oplus$ denotes concatenation, and $\operatorname{TopSum}_{r}\{\cdot\}$
sums the largest at most $r$ nonnegative entries.
\end{lem}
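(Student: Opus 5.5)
Write the continuation as $T=(j_{1},\ldots,j_{r})$ with $r\le r_{P}$ distinct locations in $\mathcal{A}_{t}(P)$. The telescoping identity (\ref{eq:value}), applied to the covariance already conditioned on $P$, gives
\[
J_{t}(P\oplus T)-J_{t}(P)=\sum_{k=1}^{r}\Delta_{t}(j_{k}\mid P\oplus T_{1:k-1}).
\]
The plan is to bound each summand by $\overline{b}_{t,j_{k}}(P)$, a quantity that depends only on the fixed prefix $P$. Because the $j_{k}$ are distinct and each $\overline{b}_{t,j}(P)\ge0$, the sum is then at most the sum of $r\le r_{P}$ distinct entries of $\{\overline{b}_{t,j}(P):j\in\mathcal{A}_{t}(P)\}$. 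That is bounded by $\operatorname{TopSum}_{r_{P}}$.

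\textbf{Step 1 (monotonicity of conditional gains).} First I would show $\Delta_{t}(j\mid P\oplus T')\le\Delta_{t}(j\mid P)$ for any extension $T'$. I would not try to prove this diminishing-returns property directly. Instead I would bound a single conditional gain by a quantity that is monotone in the conditioning set. Write $\mathbf{C}=\mathbf{C}^{P\oplus T'}_{t,j}=\boldsymbol{\Sigma}'\mathbf{E}_{j}^{\mathsf{T}}$ with $\boldsymbol{\Sigma}'=\boldsymbol{\Sigma}^{P\oplus T'}_{t}$. Since $\mathbf{S}=\mathbf{E}_{j}\mathbf{C}+\mathbf{R}_{j}\succeq\mathbf{R}_{j}$, we get
\[
\Delta=\mathrm{tr}(\mathbf{S}^{-1}\mathbf{C}^{\mathsf{T}}\mathbf{W}\mathbf{C})\le\|\mathbf{W}^{1/2}\mathbf{C}\|_{F}^{2}/\lambda_{\min}(\mathbf{R}_{j}).
\]
It then remains to show $\|\mathbf{W}^{1/2}\boldsymbol{\Sigma}'\mathbf{E}_{j}^{\mathsf{T}}\|_{F}\le\|\mathbf{W}^{1/2}\boldsymbol{\Sigma}^{P}_{t}\mathbf{E}_{j}^{\mathsf{T}}\|_{F}$.

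This is the step I expect to be the main obstacle. Posterior covariance decreases in Loewner order, $\boldsymbol{\Sigma}'\preceq\boldsymbol{\Sigma}^{P}_{t}$, but the Frobenius norm of a cross-covariance column block is not automatically monotone.

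I would try the following. Write $\boldsymbol{\Sigma}'=\boldsymbol{\Sigma}^{1/2}(\mathbf{I}-\boldsymbol{\Pi})\boldsymbol{\Sigma}^{1/2}$ with $\boldsymbol{\Sigma}=\boldsymbol{\Sigma}^{P}_{t}$. Here $\mathbf{0}\preceq\boldsymbol{\Pi}\preceq\mathbf{I}$ (for noisy observations, a contraction rather than a projection). If the statement fails in this form, the fallback is to note that the paper's bound is intended with $\delta_{t,j}(P)$ and the conditioning on $P$ only. In that case I would prove the weaker but sufficient form: the gain of $j$ after $P\oplus T'$ is at most the gain of $j$ as the first element after $P$ of the combined set. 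This follows from the joint-value identity (\ref{eq:full-value}) together with submodularity-type arguments for the Gaussian trace risk, with the stated $\lambda_{\min}(\mathbf{R}_{j})$ denominator.

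\textbf{Step 2 (numerical error).} Once the summand is bounded by $\|\mathbf{W}^{1/2}\mathbf{C}^{P}_{t,j}\|_{F}^{2}/\lambda_{\min}(\mathbf{R}_{j})$, apply the triangle inequality:
\[
\|\mathbf{W}^{1/2}\mathbf{C}^{P}_{t,j}\|_{F}\le\|\mathbf{W}^{1/2}\widehat{\mathbf{C}}^{P}_{t,j}\|_{F}+\delta_{t,j}(P).
\]
This uses the assumed computable bound. Squaring gives exactly $\overline{b}_{t,j}(P)$.

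\textbf{Step 3 (summation).} Summing over $k$ and using distinctness and nonnegativity, as described in the first paragraph, gives the TopSum bound.
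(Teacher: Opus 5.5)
Your single-candidate estimate $\mathrm{tr}(\mathbf{S}^{-1}\mathbf{C}^{\mathsf{T}}\mathbf{W}\mathbf{C})\le\|\mathbf{W}^{1/2}\mathbf{C}\|_F^2/\lambda_{\min}(\mathbf{R}_j)$ is correct, and so are your Steps 2 and 3. The gap is the plan itself: bounding each telescoped term $\Delta_t(j_k\mid P\oplus T_{1:k-1})$ by $\overline{b}_{t,j_k}(P)$ cannot work. That per-term inequality is false for general beliefs, and so are both versions of your Step 1. Conditioning can \emph{increase} cross-covariances (explaining away). So neither the weighted response norm nor the conditional gain is monotone in the conditioning set, and Gaussian trace-risk reduction is not submodular.

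Here is a concrete counterexample. Take $d=1$ and three cells $a,b,c$ that, after conditioning on $P$, are uncorrelated with the rest. Give them a precision with unit diagonal and off-diagonal entries $1/4$. This is realizable in Proposition 1 with node precisions $1/2$, edge precisions $1/4$ and $\mathbf{M}_{ij}=-1$ on the three edges. Then $\boldsymbol{\Sigma}^P_t$ has variances $10/9$ and covariances $-2/9$ on these cells. Let the task weight be $\mathbf{e}_c\mathbf{e}_c^{\mathsf{T}}$ and $T=(a,b)$. Measuring $a$ with small noise drives $\mathrm{cov}(x_b,x_c)$ from $-2/9$ toward $-4/15$ and $\mathrm{var}(x_b)$ toward $16/15$. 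So the weighted response norm of $b$ grows from $2/9$ to about $4/15$. With exact solves and $R_b=10$, $\Delta_t(b\mid P\oplus(a))\approx(4/15)^2/(16/15+10)\approx6.4\times10^{-3}$. This exceeds both $\overline{b}_{t,b}(P)=(2/9)^2/10\approx4.9\times10^{-3}$ and $\Delta_t(b\mid P)\approx4.4\times10^{-3}$. The lemma still holds here only because $\overline{b}_{t,a}(P)=(2/9)^2/R_a$ is huge, and a term-by-term argument has no way to transfer that slack.

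The missing idea is not to telescope. Your fallback already names the right tool, the joint-value identity, but it should be used directly rather than to extract a submodularity property. Condition on all of $T$ in one batch from $\boldsymbol{\Sigma}^P_t$:
$J_t(P\oplus T)-J_t(P)=\mathrm{tr}\bigl((\mathbf{S}^P_t(T))^{-1}\mathbf{G}\bigr)$, where $\mathbf{G}=(\mathbf{C}^P_{t,T})^{\mathsf{T}}\mathbf{W}\mathbf{C}^P_{t,T}\succeq\mathbf{0}$, $\mathbf{C}^P_{t,T}=\boldsymbol{\Sigma}^P_t\mathbf{E}_T^{\mathsf{T}}$, and $\mathbf{S}^P_t(T)=\mathbf{E}_T\boldsymbol{\Sigma}^P_t\mathbf{E}_T^{\mathsf{T}}+\mathbf{R}_T\succeq\mathbf{R}_T$. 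Every response involved is therefore conditioned on $P$ alone. Since $(\mathbf{S}^P_t(T))^{-1}\preceq\mathbf{R}_T^{-1}$ and $\mathbf{G}\succeq\mathbf{0}$, the trace is at most $\mathrm{tr}(\mathbf{R}_T^{-1}\mathbf{G})$. By block-diagonality of $\mathbf{R}_T$, this equals $\sum_{j\in T}\|\mathbf{W}^{1/2}\mathbf{C}^P_{t,j}\mathbf{R}_j^{-1/2}\|_F^2$.

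All interaction among the continuation's measurements lives in the off-diagonal blocks of $\mathbf{S}^P_t(T)$, and this Loewner bound simply discards it. That is why the paper's proof needs no diminishing-returns property. Your Steps 2 and 3 then finish exactly as you wrote them.

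Your Step 1 is in fact true in the attractive special case produced by the network heads. That case has positive diagonal $\mathbf{M}_{ij}$, diagonal node precisions, edge precisions and $\mathbf{R}_j$, and an entrywise nonnegative task weight. The precision is then a Stieltjes matrix, so covariances are entrywise nonnegative and nonincreasing under conditioning. However, the lemma is stated for general factors.
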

\begin{proof}
See Appendix~\ref{Proof of Lemma 1}.
\end{proof}
Lemma~\ref{lem:(Error-aware-continuation-bound)} provides an optimistic
upper bound on the maximum additional value that any feasible continuation
of a prefix can achieve. It relaxes joint innovation effects and motion
interactions, yielding a cheap bound that is sufficient for branch
pruning without assuming diminishing returns. Define
\begin{equation}
\adjustbox{max width=.89\linewidth}{\ensuremath{U_{t}(P;Q^{0}_{t})=\overline{\Delta}_{t}(P,Q^{0}_{t})+\operatorname{TopSum}_{r_{P}}\{\overline{b}_{t,j}(P):j\in\mathcal{A}_{t}(P)\}.}}\label{eq:bound}
\end{equation}
The first term bounds the prefix value relative to the reference;
the second bounds its possible additional value. Consequently, $J_{t}(P\oplus T)-J_{t}(Q^{0}_{t})\leq U_{t}(P;Q^{0}_{t})$
for every feasible continuation. A prefix with $U_{t}(P;Q^{0}_{t})\leq0$
is discarded. Otherwise, its successors are generated from the current
endpoint.

\begin{figure}[!t]
\centering\includegraphics[width=1.0\linewidth]{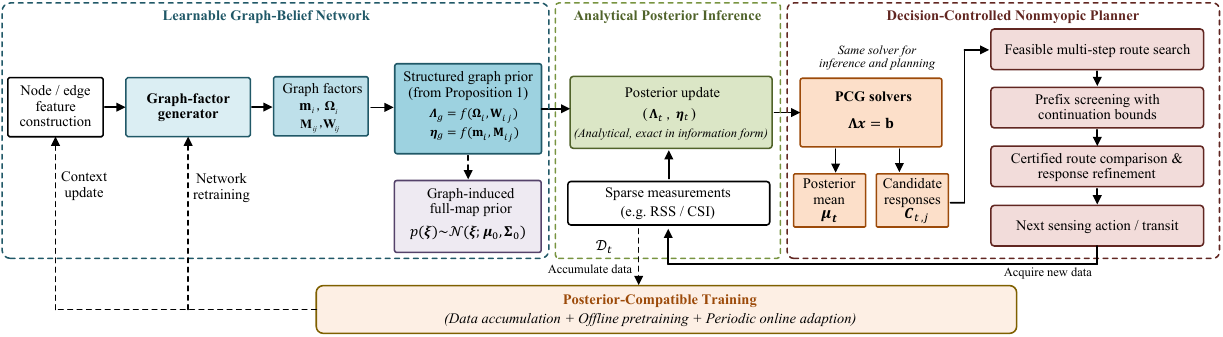}\caption{Overall implementation pipeline of GraphBARC. The graph-factor generator
maps measurement context into interpretable Gaussian node and edge
factors, which are assembled with new measurement information into
the posterior system. Numerical posterior queries provide the map
estimate and candidate responses for the decision-controlled multi-step
planner, forming a closed loop between posterior inference and active
sensing.}
\label{fig:network-architecture}
\end{figure}

After prefix exploration, current incumbent $Q_{t}$ is initialized
as $Q^{0}_{t}$. A candidate replaces it only when its lower bound
is positive. Let $\mathcal{T}_{t}$ contain the unresolved terminal
routes after pruning. Define $g_{t}=\max\{0,\max_{P\in\mathcal{T}_{t}}\overline{\Delta}_{t}(P,Q_{t})\},$
with $g_{t}=0$ when $\mathcal{T}_{t}$ is empty. Thus, $g_{t}$ measures
the largest certified improvement that may still be achieved by an
unresolved route.
\begin{prop}
(Refinement-compatible decision accuracy). Suppose that the certified
comparison and continuation bounds are valid, the prefix search completely
covers the declared route family, and all exclusions and incumbent
replacements are certified. The completed comparison above satisfies
\begin{equation}
0\leq\max_{P\in\mathcal{P}_{t}}J_{t}(P)-J_{t}(Q_{t})\leq g_{t}\label{eq:Refinement-compatible}
\end{equation}
at every completed comparison stage. Previously certified exclusions
remain valid after subsequent response refinement.
\end{prop}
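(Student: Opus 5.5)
The lower bound follows because $Q_t$ is always a feasible route in $\mathcal{P}_t$. It is initialized as the feasible greedy route $Q^0_t$, and it is only ever replaced by a candidate terminal route generated inside the declared family. Hence $J_t(Q_t)\le\max_{P\in\mathcal{P}_t}J_t(P)$. For the upper bound, fix a maximizer $P^\star\in\mathcal{P}_t$. Complete coverage of the prefix search means $P^\star$ has exactly one fate: (a) it equals the current incumbent $Q_t$; (b) it lies in the unresolved set $\mathcal{T}_t$; (c) some prefix $P'$ of it was discarded by screening, i.e., $U_t(P';Q^0_t)\le 0$; or (d) it was compared against some incumbent and certified not to improve on it, i.e., $\overline{\Delta}_t(P^\star,Q')\le 0$. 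I will treat these cases in turn.

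Case (a) is trivial. In case (b), Corollary~\ref{lem:(Shared-error-route-comparisons)} gives $J_t(P^\star)-J_t(Q_t)\le\overline{\Delta}_t(P^\star,Q_t)\le g_t$, by the definition of $g_t$. In case (c), write $P^\star=P'\oplus T$. Then Corollary~\ref{lem:(Shared-error-route-comparisons)} applied to $(P',Q^0_t)$, together with Lemma~\ref{lem:(Error-aware-continuation-bound)}, yields $J_t(P^\star)-J_t(Q^0_t)\le U_t(P';Q^0_t)\le 0$. It remains to pass from $Q^0_t$ to the current incumbent. For this I will prove the monotonicity claim that $J_t$ of the incumbent never decreases: every replacement of $Q$ by $Q''$ requires $\underline{\Delta}_t(Q'',Q)>0$, and Corollary~\ref{lem:(Shared-error-route-comparisons)} turns this into $J_t(Q'')>J_t(Q)$. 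An induction over the replacement sequence then gives $J_t(Q_t)\ge J_t(Q')$ for every earlier incumbent $Q'$, in particular for $Q^0_t$. Case (d) follows from the same monotonicity: $J_t(P^\star)\le J_t(Q')\le J_t(Q_t)$. In all cases the gap is at most $\max\{0,\overline{\Delta}_t(P^\star,Q_t)\}\le g_t$. When $\mathcal{T}_t$ is empty, only cases (a), (c), (d) can occur, so the gap is $\le 0=g_t$.

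For refinement compatibility, the key observation is that every certificate above is a statement about the \emph{exact} values $J_t$, which depend only on the fixed belief, weights, noise and domain during a planning call. The inequalities $J_t(P'\oplus T)-J_t(Q^0_t)\le 0$ and $J_t(Q'')>J_t(Q)$ were proved from valid bounds at the time they were issued, and they remain true facts after the responses $\widehat{\mathbf{C}}_{t,j}$ are refined. Refinement only changes the approximate quantities and the widths $e_t$, not the $J_t$ values. Hence the case analysis can be rerun at each later comparison stage with the same excluded sets and incumbent chain. Only the intervals for routes in $\mathcal{T}_t$ are recomputed with the refined residuals, for which Corollary~\ref{lem:(Shared-error-route-comparisons)} again applies.

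The main obstacle is the bookkeeping that makes "complete coverage" precise: showing that every $P^\star$ falls into one of (a)--(d) at each stage. The subtle point is that screening used the fixed reference $Q^0_t$ rather than the evolving incumbent, so the monotonicity induction must be stated carefully. I would formalize the planner's history as a sequence of certified events and argue by induction on that sequence.
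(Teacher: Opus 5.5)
Your proposal is correct and follows essentially the same argument as the paper. Exclusions certified against $Q^0_t$ (via the continuation bound) or against an earlier incumbent carry over to the current $Q_t$ because certified replacements make the exact incumbent value nondecreasing, routes in $\mathcal{T}_t$ are bounded by the comparison interval that defines $g_t$, and refinement is harmless because the certificates are statements about the fixed exact values $J_t$. Your version is slightly more explicit than the paper's: you justify the lower bound $0\le\cdot$ by the feasibility of $Q_t$, which the paper leaves implicit. One bookkeeping point for when you formalize the case split: dethroned former incumbents are removed from $\mathcal{T}_t$ on promotion and fit none of (a)--(d) as stated. Your induction $J_t(Q')\le J_t(Q_t)$ already covers them, so you only need to add them as a case.
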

\begin{proof}
See Appendix~\ref{Proof of Proposition 3}.
\end{proof}
This proposition separates route coverage from numerical accuracy.
An excluded branch cannot exceed the incumbent against which its exclusion
was certified, and certified incumbent replacements cannot decrease
the exact model value. Consequently, refining the numerical responses
does not require reopening excluded branches.

Let $\varepsilon_{\mathrm{dec}}\ge0$ be the prescribed tolerance.
If no further certified incumbent improvement is available, planning
stops when $g_{t}\le\varepsilon_{\mathrm{dec}}$. Otherwise, only
the response columns involved in unresolved comparisons are further
refined by the solver. Their residual certificates, route-level covariance,
and affected Gram blocks are updated accordingly, after which the
corresponding decision intervals are recomputed. The prefix search
is not repeated. If the numerical budget is exhausted before the decision
tolerance is met, the planner returns the current incumbent and its
certified decision-gap bound.

\section{Network Implementation and Training}

Realizing the framework in Sections III and IV requires the learned
representation, training objective, and online execution to remain
consistent. The network must generate valid Gaussian factors from
sparse observations, learn how additional measurements should refine
predictions elsewhere, and preserve the fixed-model conditions required
by certified route comparisons. We address these requirements through
an observation-aware graph-factor generator, a disjoint context-assimilation-prediction
training objective, and an online procedure that separates periodic
graph refreshes from analytical measurement updates and numerical
decision refinement.

\subsection{Structured Graph-Factor Generator\label{subsec:Deep-Learning-Graph}}

The network predicts the factors of the joint belief rather than directly
outputting an independent estimate and uncertainty score for each
cell. Node and edge representations determine the factor parameters,
while the assembled precision matrix determines the full-map mean,
covariance, and measurement responses, combing the representation
capability of deep learning with the interpretability of Bayesian
inference.

\begin{figure}[!t]
\centering\includegraphics[width=1.0\linewidth]{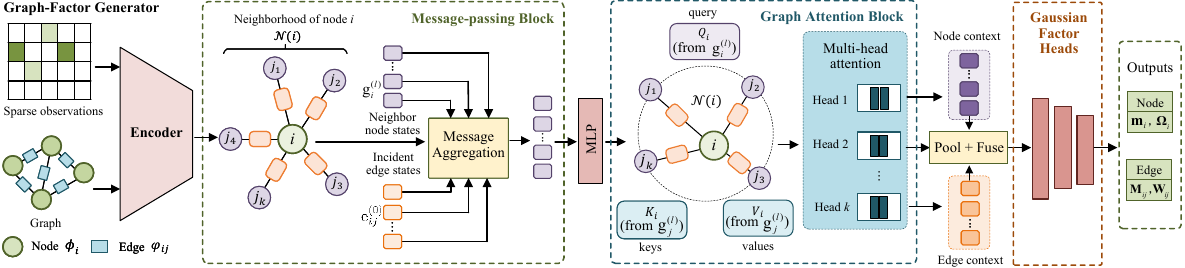}\caption{Observation-aware graph-factor generator. Local message passing and
graph attention aggregate sparse measurement context, and the output
heads produce Gaussian node factors and edge factors, which jointly
define the full-map belief used for analytical assimilation and active
sensing.}
\label{fig:network-architecture-1}
\end{figure}

\subsubsection{Observation-Aware Graph Encoding}

For a context set $\mathcal{D}_{\tau}$, define the input feature
of cell $i$ as $\boldsymbol{\phi}_{i}=[\mathbf{p}^{\mathrm{T}}_{i},(o_{i}\mathbf{y}_{i})^{\mathrm{T}},o_{i}]$,
where $\mathbf{p}_{i}$ is the cell coordinate, $o_{i}=\mathbb{I}\{i\in\mathcal{D}_{\tau}\}$
is the binary observation indicator, and $\mathbf{y}_{i}$ is the
channel observation. For an edge $(i,j)$, the geometric feature is
$\boldsymbol{\varphi}_{ij}=[d_{ij},\varPhi_{ij}]$, where $d_{ij}$
and $\varPhi_{ij}$ are the relative distance and direction between
the two cells. An observation-aware embedding is formed as
\begin{equation}
\tilde{\boldsymbol{\phi}}_{i}=\mathrm{LN}(f_{\mathrm{con}}(\mathbf{p}^{\mathrm{T}}_{i},(o_{i}\mathbf{y}_{i})^{\mathrm{T}})+f_{\mathrm{emb}}(o_{i})),\label{eq:node and edge encoder-1}
\end{equation}
where $f_{\mathrm{con}}(\cdot)$ is the feature encoder, $f_{\mathrm{emb}}(\cdot)$
is the embedding function for the indicator, and $\mathrm{LN}(\cdot)$
denotes layer normalization. The node and edge features are further
encoded by
\begin{equation}
\mathbf{g}^{(0)}_{i}=f_{\mathrm{e1}}(\tilde{\boldsymbol{\phi}}_{i}),\hspace*{0.3cm}\mathbf{e}^{(0)}_{ij}=f_{\mathrm{e2}}(\tilde{\boldsymbol{\phi}}_{i},\tilde{\boldsymbol{\phi}}_{j},\boldsymbol{\varphi}_{ij}).\label{eq:node and edge encoder}
\end{equation}
Here, $f_{\mathrm{e1/e2}}(\cdot)$ denote the node and edge encoders,
and $\mathbf{g}^{(0)}_{i},\mathbf{e}^{(0)}_{ij}\in\mathbb{R}^{B}$
with $B$ being the embedding dimension.

\subsubsection{Graph Context Extraction}

The first stage uses local message passing to encode observation-dependent
relations between neighboring cells. Let $\mathcal{N}(i)=\{j\mid(i,j)\in\mathcal{E}\}$
be the neighborhood of node. At layer $l$,
\begin{equation}
\bar{\mathbf{v}}^{(l)}_{i}=\frac{1}{|\mathcal{N}(i)|}\sum_{j\in\mathcal{N}(i)}f^{(l)}_{\mathrm{v}}(\mathbf{g}^{(l)}_{i},\mathbf{g}^{(l)}_{j},\mathbf{e}^{(0)}_{ij})\label{eq:local message}
\end{equation}
and the node state is updated by
\begin{equation}
\mathbf{g}^{(l+1)}_{i}=\mathrm{LN}(\mathbf{g}^{(l)}_{i}+f^{(l)}_{\mathrm{u}}(\mathbf{g}^{(l)}_{i},\bar{\mathbf{v}}^{(l)}_{i})),\label{eq:node state}
\end{equation}
where $f^{(l)}_{\mathrm{v}}(\cdot)$ and $f^{(\ell)}_{\mathrm{u}}(\cdot)$
are the message-passing and node-update networks, respectively. After
$L_{m}$ layers, the edge representation is refined as $\mathbf{\tilde{e}}_{ij}=f_{\mathrm{r}}([\mathbf{e}^{(0)}_{ij},\mathbf{g}^{(L)}_{i},\mathbf{g}^{(L)}_{j}]).$
Here, $f_{\mathrm{r}}(\cdot)$ is the edge-refinement network.

A graph-attention stage then assigns observation-dependent weights
to neighboring representations. At attention layer $l$, node $i$
attends exclusively to its graph-connected nodes:
\begin{equation}
\mathbf{g}^{(l+1)}_{i}=\mathrm{GraphAtten}^{(l)}(\mathbf{g}^{(l)}_{i},\{\mathbf{g}^{(l)}_{j}|j\in\mathcal{N}(i)\cup\{i\}),\label{eq:observed-node information}
\end{equation}
where $\mathbf{g}^{(l)}_{i}$ provides the query, and the graph-connected
nodes $\{\mathbf{g}^{(l)}_{j}|j\in\mathcal{N}(i)\cup\{i\}$ provide
the keys and values. This attention block learns observation-dependent
importance weights for graph-connected nodes. By stacking multiple
attention layers, measurement information can propagate along multi-hop
graph paths while avoiding unrestricted all-to-all attention.

\subsubsection{Gaussian Factor Heads}

This component converts the learned representations into Gaussian
graph factors. Given the final graph attention output $\tilde{\mathbf{g}}_{i}=\mathbf{g}^{(L)}_{i}$,
the node head generates a node prior and a positive-definite precision
matrix:
\begin{equation}
\mathbf{m}_{i}=f_{m}(\tilde{\mathbf{g}}_{i}),\hspace*{0.3cm}\boldsymbol{\Omega}_{i}=\mathrm{Diag}(\mathrm{softp}(f_{\Omega}(\tilde{\mathbf{g}}_{i})))+\epsilon_{\Omega}\mathbf{I}_{d}.\label{eq:node factors}
\end{equation}
Here, $f_{m}(\cdot)$ and $f_{\Omega}(\cdot)$ are the node-mean and
node-precision heads, respectively, $\epsilon_{\Omega}>0$ is a small
constant. The function $\mathrm{softp}(\boldsymbol{x})=\log(1+\exp(\boldsymbol{x}))$
ensures positive entries, and $\mathrm{Diag}(\cdot)$ forms a diagonal
matrix from a vector.

For each edge $(i,j)$, the transformation head produces
\begin{equation}
\mathbf{M}_{ij}=\mathrm{\mathrm{Diag}}(\exp(\beta_{M}\tanh(f_{M}(\mathbf{z}_{ij})))),\label{eq:edge transform}
\end{equation}
where $\mathbf{z}_{ij}=[\tilde{\mathbf{g}}_{i},\tilde{\mathbf{g}}_{j},\mathbf{\tilde{e}}_{ij}]$,
$\beta_{M}$ bounds the scaling range, and $f_{M}(\cdot)$ is the
transformation head. The edge precision is
\begin{equation}
\mathbf{W}_{ij}=\mathrm{Diag}(\mathrm{softp}(f_{w}(\mathbf{z}_{ij})+\bar{w}\mathrm{tanh}(\mathbf{\alpha}_{ij}))\kappa(d_{ij})+\epsilon_{w}),\label{eq:edge precision}
\end{equation}
where $f_{w}(\cdot)$ is the edge-precision head, $\bar{w}>0$ is
a learnable scale, and $\epsilon_{W}>0$ is a small constant. Here,
$\mathbf{\alpha}_{ij}$ is the attention score from the attention
block, and $\kappa(d_{ij})=\exp(-d_{ij}/\ell_{W})$ is the distance-decay
kernel with decay length $\ell_{W}$.

\subsection{Posterior-Compatible Graph Training}

Training must evaluate both the belief generated from sparse context
and its response to additional evidence. At each training iteration,
the context set is partitioned into three disjoint, nonempty subsets:
$\mathcal{D}_{\tau}=$$\mathcal{D}_{\mathcal{C}}\dot{\cup}\mathcal{D}_{\mathcal{B}}\dot{\cup}\mathcal{D}_{\mathcal{P}}$,
where $\dot{\cup}$ denotes disjoint union. The subsets provide context,
assimilation evidence, and prediction targets, respectively.

The graph generator receives only $\mathcal{D}_{\mathcal{C}}$, and
measurements in $\mathcal{D}_{\mathcal{B}}$ and $\mathcal{D}_{\mathcal{P}}$
are not supplied to the graph-generation forward pass. Let $(\boldsymbol{\hat{\Lambda}}_{g},\boldsymbol{\hat{\eta}}_{g})=f_{graph}(\mathcal{D}_{\mathcal{C}})$
be the natural parameters assembled from the generated factors. The
graph-induced mean and covariance are $\boldsymbol{\hat{\mu}}_{\mathcal{C}}=\boldsymbol{\hat{\Lambda}}^{-1}_{g}\boldsymbol{\hat{\eta}}_{g}$
and $\boldsymbol{\hat{\Sigma}}_{\mathcal{C}}=\boldsymbol{\hat{\Lambda}}^{-1}_{g},$
where the inverse is obtained using differentiable numerical solvers.
Additional evidence from $\mathcal{D}_{\mathcal{B}}$ is assimilated
through the analytic Bayesian update $(\boldsymbol{\hat{\mu}}_{\mathcal{B}},\boldsymbol{\hat{\Sigma}}_{\mathcal{B}})=\mathrm{PosteriorUpdate}(\boldsymbol{\hat{\mu}}_{\mathcal{C}},\boldsymbol{\hat{\Sigma}}_{\mathcal{C}},\mathcal{D}_{\mathcal{B}})$.

Define the robust mean-fitting loss as
\begin{equation}
\mathcal{L}_{H}(\boldsymbol{\mu},\mathcal{D})=\frac{1}{d|\mathcal{D}|}\sum_{i\in\mathcal{D}}\sum^{d}_{r=1}h_{\delta_{H}}(y_{i,r}-\mu_{i,r})
\end{equation}
where $\delta_{H}>0$ is its transition threshold and
\begin{equation}
h_{\delta_{H}}(e)=\begin{cases}
\dfrac{e^{2}}{2\delta_{H}}, & |e|<\delta_{H},\\[1mm]
|e|-\dfrac{\delta_{H}}{2}, & |e|\ge\delta_{H}.
\end{cases}
\end{equation}
To score noisy observations consistently with the observation model,
define $\mathbf{V}_{i}(\boldsymbol{\Sigma})=\mathbf{E}_{i}\boldsymbol{\Sigma}\mathbf{E}^{\mathsf{T}}_{i}+\mathbf{R}_{i}$.
The marginal predictive negative log-likelihood is
\begin{equation}
\adjustbox{max width=.88\linewidth}{\ensuremath{\begin{aligned}\mathcal{L}_{NLL}(\boldsymbol{\mu},\boldsymbol{\Sigma},\mathcal{D}) & =\frac{1}{|\mathcal{\mathcal{D}}|}\sum_{i\in\mathcal{\mathcal{D}}}\frac{1}{2}((\mathbf{y}_{i}-\boldsymbol{\mu}_{i})^{\mathsf{T}}\mathbf{V}_{i}(\boldsymbol{\Sigma})^{-1}(\mathbf{y}_{i}-\boldsymbol{\mu}_{i})\\
 & \hspace*{0.5cm}+\log\mathrm{det}|\mathbf{V}_{i}(\boldsymbol{\Sigma})|+d\log(2\pi)),
\end{aligned}
}}\label{eq:graph_edge-1-1}
\end{equation}

The final training objective is formulated as
\begin{equation}
\begin{aligned}\mathcal{L}_{\mathrm{gra}} & =\frac{1}{2}\Big[\mathcal{L}_{H}(\boldsymbol{\hat{\mu}}_{\mathcal{C}},\mathcal{D}_{\mathcal{B}})+\mathcal{L}_{H}(\boldsymbol{\hat{\mu}}_{\mathcal{B}},\mathcal{\mathcal{D}_{\mathcal{P}}})\Big]\\
 & \hspace*{0.5cm}+\lambda_{b}\Big[\mathcal{L}_{\mathrm{NLL}}(\operatorname{sg}(\boldsymbol{\hat{\mu}}_{\mathcal{C}}),\boldsymbol{\hat{\Sigma}}_{\mathcal{C}},\mathcal{D}_{\mathcal{B}})\\
 & \hspace*{1cm}+\lambda_{p}\mathcal{L}_{\mathrm{NLL}}(\operatorname{sg}(\boldsymbol{\hat{\mu}}_{\mathcal{B}}),\boldsymbol{\hat{\Sigma}}_{\mathcal{B}},\mathcal{\mathcal{D}_{\mathcal{P}}})\Big].
\end{aligned}
\label{eq:negative log-likelihood function}
\end{equation}
The coefficients $\lambda_{b},\lambda_{p}\ge0$ control the regularization
terms, and the operator $\operatorname{sg}(\cdot)$ stops gradients.
The first term trains the graph model to produce a reliable belief
from sparse context alone, while the second encourages the model to
support analytical posterior refinement with additional measurements.
The network is trained to generate graph factors that are compatible
with both sparse prediction and measurement propagation.

\subsection{Implementation}

GraphBARC represents the current belief by its natural parameters
and computes only the inverse actions required for reconstruction,
training, and planning. It computes the mean $\boldsymbol{\mu}_{t}$
and candidate responses $\mathbf{C}_{t,j}$ through $\boldsymbol{\Lambda}_{t}\boldsymbol{\mu}_{t}=\boldsymbol{\eta}_{t}$
and $\boldsymbol{\Lambda}_{t}\mathbf{C}_{t,j}=\mathbf{E}^{\mathsf{T}}_{j}$,
respectively. Both queries use matrix-free preconditioned conjugate
gradient (PCG) iterations with a symmetric positive-definite preconditioner.
They share the same numerical solver, differing only in their right-hand
sides.

Within each planning call, the belief, task weights, noise model,
and candidate domain remain fixed. Shared responses and Gram blocks
support the certified prefix screening and selective numerical refinement.
The planner returns a feasible route and its attained decision gap
$g_{t}$, distinguishing satisfaction of the prescribed tolerance,
$g_{t}\le\varepsilon_{\mathrm{dec}}$, from termination at a numerical
work limit. The selected route is executed as a batch, while measurements
are assimilated sequentially.

The network is initialized using the sparse observations $\mathcal{D}_{0}$
and adapted periodically using the accumulated dataset $\mathcal{D}_{t}$.
Between graph refreshes, its factors remain fixed and only newly acquired
measurements are assimilated. At a scheduled refresh, the regenerated
belief replaces the current information system. The final map is obtained
by solving the mean system after the last update. The complete procedure
is summarized in Algorithm~\ref{Alg1:online_training_active_sampling}.

\begin{algorithm}[!t]
\begin{algorithmic}[1]

\Require Initial observation set $\mathcal{D}_{0}$, sensing budget
$B=(\varGamma,\ell_{\max})$, lookahead horizon $K$, and tolerance
$\varepsilon_{\mathrm{dec}}$

\Ensure Radio map construction and sampling trajectory

\State Pretrain $f_{graph}(\cdot)$ with $\mathcal{D}_{0}$; assemble
$\boldsymbol{\Lambda}_{0}$, $\boldsymbol{\eta}_{0}$

\State Initialize $t\leftarrow0$, $\ell_{t}\leftarrow\ell_{\max}$,
$n\leftarrow0$ and $\mathbf{p}^{\mathrm{u}}_{t}$

\While{under the sensing budget $B$}

\State Set $\mathit{\Gamma}_{t}=\min\{K,\varGamma-n\}$; obtain $\boldsymbol{\hat{\mu}}_{t}$

\State Compute $\widehat{\mathbf{C}}_{t,j}$ and $\widehat{\mathbf{G}}_{t,ij}$
for all candidates

\State Construct a greedy route $Q^{0}_{t}$; initialize $\mathcal{T}_{t}=\{Q^{0}_{t}\}$

\State Generate feasible prefixes $P=(j_{1},...,j_{k})$ for $k\leq\mathit{\Gamma}_{t}$;
prune those with $U_{t}(P;Q^{0}_{t})<0$; add $P$ to $\mathcal{T}_{t}$
if $k=\mathit{\Gamma}_{t}$

\State Set $Q_{t}\leftarrow Q^{0}_{t}$

\While{true}

\State For all $P\in\mathcal{T}_{t}$, compute $\underline{\Delta}_{t}(P,Q_{t})$
and $\overline{\Delta}_{t}(P,Q_{t})$

\State Remove routes satisfying $\ensuremath{\overline{\Delta}_{t}(P,Q_{t})\le0}$

\If{$\ensuremath{\underline{\Delta}_{t}(P,Q_{t})>0}$}

\State Set $Q_{t}\leftarrow P$, remove $P$ from $\mathcal{T}_{t}$;
continue

\EndIf

\State Compute $g_{t}$, and take $g_{t}=0$ when $\mathcal{T}_{t}=\varnothing$

\State If $g_{t}\le\varepsilon_{\mathrm{dec}}$ or the numerical
limit is reached, report $g_{t}$ and break

\State Refine response columns, $\underline{\Delta}_{t}(P,Q_{t})$
and $\overline{\Delta}_{t}(P,Q_{t})$ for $P\in\mathcal{T}_{t}$ under
higher-precision PCG

\EndWhile

\State Execute the route $Q_{t}$; update $\mathcal{D}_{t+1}$, $\ell_{t+1}$,
$\mathbf{p}^{\mathrm{u}}_{t+1}$

\State Refresh $\boldsymbol{\Lambda}_{t+1},\boldsymbol{\eta}_{t+1}$
with analytical Bayesian updates

\If {periodic adaptation is triggered}

\State Train $f_{graph}(\cdot)$ with $\mathcal{D}_{t+1}$; regenerate
$\boldsymbol{\Lambda}_{t+1}$, $\boldsymbol{\eta}_{t+1}$

\EndIf

\State Set $t\leftarrow t+1$, $n\leftarrow n+\mathit{\Gamma}_{t}$

\EndWhile

\State Output $\boldsymbol{\hat{\mu}}_{t}$ and $\mathcal{D}_{t}$

\end{algorithmic}

\caption{Active Sensing and Inference with GraphBARC.}

\label{Alg1:online_training_active_sampling}
\end{algorithm}

\section{Experiment Results}

This section evaluates the proposed GraphBARC in terms of reconstruction
accuracy and sensing efficiency. The experiments are conducted using
channel data generated by the Sionna ray-tracing (RT) simulator \citep{sionnarttechnicalreport}.

In the simulation, we consider an urban environment covering an area
of 860~m $\times$ 680~m, where seven transmitters (TXs) are randomly
deployed at heights ranging from 25 to 40 m. Each TX is equipped with
a 16-element antenna array. The simulations use a band-limited signal
with a carrier frequency of 3.5 GHz and an effective bandwidth of
100 MHz. The ground truth radio map is computed on a $128\times128$
planar grid at a fixed altitude of 50 m. In total, 10 radio maps are
generated under different TX location configurations for evaluation.

The proposed GraphBARC is compared with the following representative
baselines:
\begin{enumerate}
\item BAL-EGP \citep{PolySadYe:J24}: It uses an ensemble of Gaussian processes
to reconstruct the radio map. The next location is selected by maximizing
the ensemble predictive variance, and the model posterior is updated
with new samples.
\item DRUE \citep{ShreRomeChe:J23}: It uses two autoencoders to estimate
the radio map and uncertainty map. During active sensing, high-uncertainty
locations are selected as candidate goals, and A{*} search is used
to plan grid paths toward these goals.
\item Flow Matching \citep{SunShicXiang:C26}: It uses a pretrained flow-matching
UNet for radio map reconstruction. The variance of multiple stochastic
radio map reconstructions gives the uncertainty map, which guides
active sampling.
\end{enumerate}

Let $\mathbf{\hat{X}}$ be the reconstructed radio map (in dB scale).
We evaluate the reconstruction performance using the mean absolute
error (MAE) and RMSE, which are defined as $\mathrm{MAE}=\frac{1}{M}||\mathbf{X}-\mathbf{\hat{X}}||_{1}$
and $\mathrm{RMSE}=\frac{1}{\sqrt{M}}||\mathbf{X}-\mathbf{\hat{X}}||_{2}$.
To reflect the energy error, selected experiments additionally evaluate
the normalized mean-squared error (NMSE) in the linear power domain,
given by $\mathrm{NMSE}=||10^{\mathbf{X}/10}-10^{\mathbf{\hat{X}}/10}||^{2}_{2}/||10^{\mathbf{X}/10}||^{2}_{2}$.

Specifically, 3\% of the grid cells are randomly observed for model
initialization. All trajectories start from the origin, and each sampling
action is restricted to three hops on the eight-connected grid. We
set the lookahead horizon $K=4$ and $\varepsilon_{\mathrm{dec}}=10^{-6}$.
Candidate responses initially use 37 PCG iterations, with refinements
of 16 iterations and a maximum of 60. Retraining is performed after
every 100 newly collected measurements. Sampling stops when reaching
the 15\% grid-cell observation budget, including the initial 3\%.

\subsection{Active Radio Map Reconstruction}

\begin{figure}[!t]
\centering \includegraphics[width=1.0\linewidth]{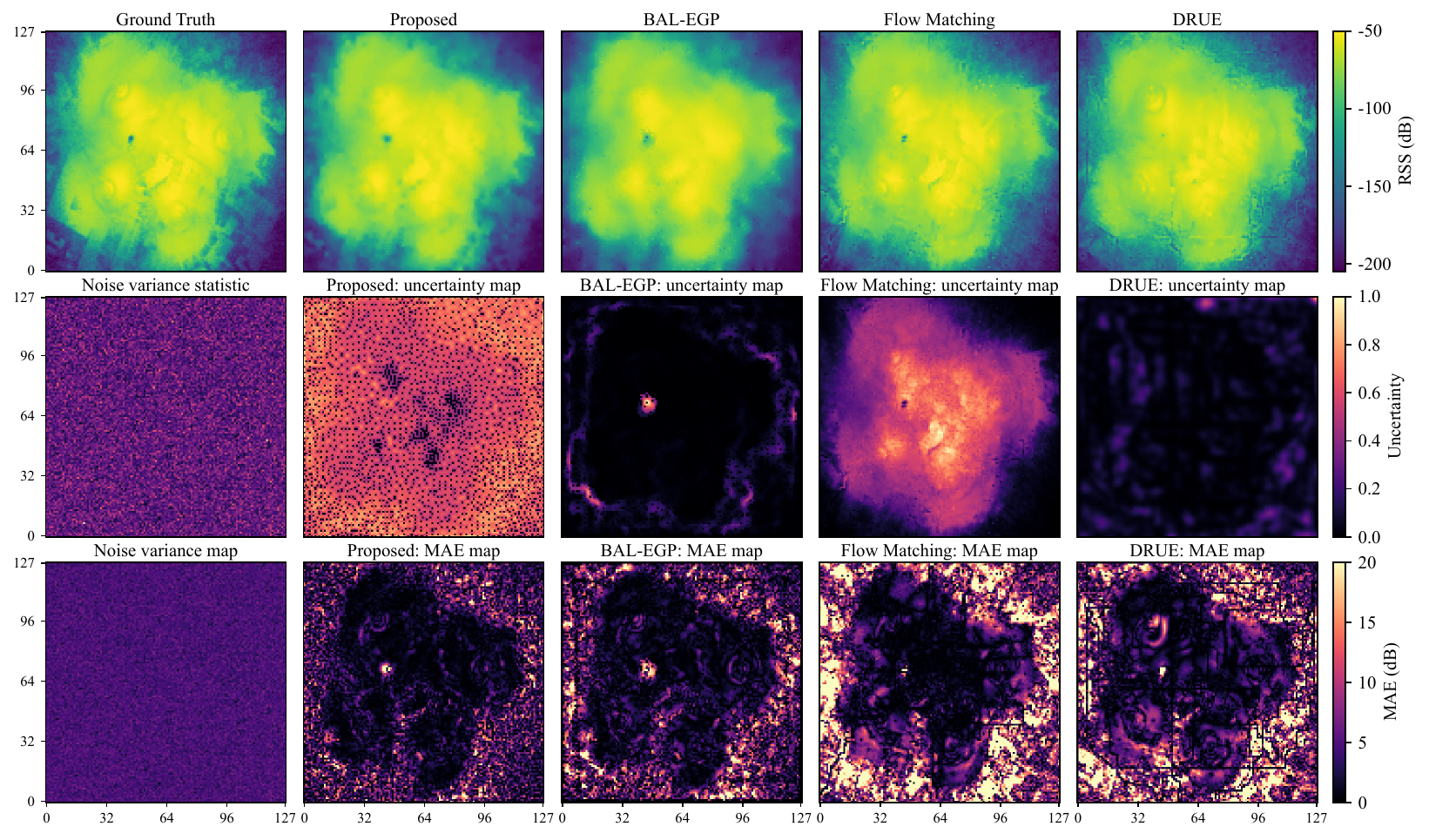}
\caption{Visual comparison of reconstructed radio maps, normalized uncertainty
maps, and spatial MAE distributions under 15\% observation coverage.}
\label{fig:radio_map_example}
\end{figure}

We first evaluate the radio-map reconstruction accuracy and the benefits
of nonmyopic sensing.

Table~\ref{Tab:radio map construction accuracy} compares reconstruction
accuracy on the common set of cells that remain unobserved by all
methods. For GraphBARC, $K=1$ reduces to the one-step greedy policy
in (\ref{eq:one-step Bayes risk}). It is observed that GraphBARC
with $K=4$ outperforms all baseline methods, achieving the lowest
errors across all metrics. Compared with BAL-EGP, it reduces these
metrics by $16.2\%$, $16.1\%$, and $41.5\%$, respectively, while
its MAE and RMSE improvements over DRUE and Flow Matching exceed $52\%$.
Although Flow Matching exhibits relatively large MAE and RMSE, it
achieves the lowest NMSE among the external baselines. Notably, increasing
the planning horizon from $K=1$ to $K=4$ further reduces MAE and
RMSE by $7.1\%$ and $7.3\%$, respectively, demonstrating the advantage
of multi-step sensing over one-step greedy acquisition.

Table~\ref{Tab:Observation demand} reports the number of observations
required to reach the prescribed full-map reconstruction accuracy.
For the RMSE target, GraphBARC with $K=4$ requires only 1,760 observations,
$4.3\%$ fewer than $K=1$ and $39.5\%$ fewer than BAL-EGP. For the
NMSE target, the corresponding reductions increase to $19.4\%$ and
$79.4\%$, respectively. Flow Matching and DRUE require 1.87 and 4.66
times as many observations, respectively. These results further demonstrate
that joint multi-step sensing can improve measurement efficiency under
prescribed reconstruction-error constraints.

Fig.~\ref{fig:radio_map_example} visualizes the reconstructed radio
maps, normalized uncertainty maps, and spatial error distributions.
GraphBARC captures the main propagation patterns and spatial variations,
although some fine-scale structures remain smoothed. BAL-EGP exhibits
more pronounced smoothing, whereas Flow Matching and DRUE recover
the central high-RSS regions but incur larger errors in several peripheral
and transition areas. The uncertainty maps reveal different spatial
characteristics. Flow Matching assigns high uncertainty to the central
high-power region, where the actual errors are comparatively small,
while BAL-EGP and DRUE exhibit localized uncertainty patterns that
do not align with their error distributions. The proposed uncertainty
map shows partial spatial correspondence with reconstruction errors.
Although this correspondence is not exact, it suggests that the estimated
uncertainty captures some spatial variation in reconstruction difficulty.

\begin{table}
\caption{Reconstruction accuracy on common unobserved cells}

\centering\label{Tab:radio map construction accuracy} \renewcommand\arraystretch{1.3}
\begin{tabular}{ p{3.0cm}<{\centering}| p{2.5cm}<{\centering}| p{2.5cm}<{\centering}| p{2.5cm}<{\centering}}
\hline

Scheme          & MAE (dB)      & RMSE (dB)    & NMSE \\

\hline
Flow Matching    &$7.845$   &$11.475$	&$0.162$\\
\hline

DRUE             &$7.726$   &$11.110$   &$0.292$\\
\hline

BAL-EGP          &$4.307$   &$6.311$    &$0.176$\\
\hline

Proposed ($K=1$)         &$3.886$ &$5.711$ &$0.107$ \\
\hline

Proposed ($K=4$)         &$\mathbf{3.609}$ &$\mathbf{5.293}$ &$\mathbf{0.103}$ \\
\hline

\end{tabular}
\end{table}

Fig.~\ref{fig:SampleVsMthodRMSE} compares RMSE across observation
coverage levels on a representative radio map, with all methods evaluated
on a common set of unobserved cells. Both proposed variants consistently
outperform the external baselines, with the planner $K=4$ achieving
the lowest RMSE across all evaluated observation ratios. Compared
with BAL-EGP, its RMSE reduction increases overall from $11.0\%$
at $10\%$ observation coverage to $29.8\%$ at $35\%$, while the
improvements over DRUE and Flow Matching reach $60.7\%$ and $62.3\%$,
respectively. The proposed $K=4$ consistently improves upon $K=1$
by jointly evaluating gains over feasible multi-step trajectories,
with relative improvements of $6.3\%$ and $5.3\%$ at $10\%$ and
$35\%$ observation coverage, respectively. These results consistently
support the benefit of nonmyopic sensing over greedy selection.

\begin{table}
\caption{Observations required to meet target reconstruction errors}

\centering\label{Tab:Observation demand} \renewcommand\arraystretch{1.3}
\begin{tabular}{p{3.0cm}<{\centering}|p{1.6cm}<{\centering}|p{1.6cm}<{\centering}|p{1.6cm}<{\centering}|p{1.6cm}<{\centering}}
\hline

\multirow{2}{*}{Scheme} &
\multicolumn{2}{c|}{RMSE $\leq$ 5.5 dB} &
\multicolumn{2}{c}{NMSE $\leq$ 0.11} \\
\cline{2-5}

     & Num. & Relative &Num. & Relative\\
\hline
Flow Matching & $13{,}504$ & $7.67\times$ & $1{,}708$ & $1.87\times$\\
\hline
DRUE          & $11{,}140$ & $6.33\times$ & $4{,}248$ & $4.66\times$\\
\hline
BAL-EGP       & $2{,}908$  & $1.65\times$ & $4{,}428$ & $4.85\times$\\
\hline

Proposed ($K=1$)      &$1{,}840$ &$1.05\times$ &$1{,}132$ &$1.24\times$\\
\hline

Proposed ($K=4$)      &$\mathbf{1{,}760}$ &$\mathbf{1.00}\times$ &$\mathbf{912}$ &$\mathbf{1.00}\times$\\
\hline
\end{tabular}
\end{table}

\begin{figure}[!t]
\centering
\includegraphics[width=0.65\linewidth]{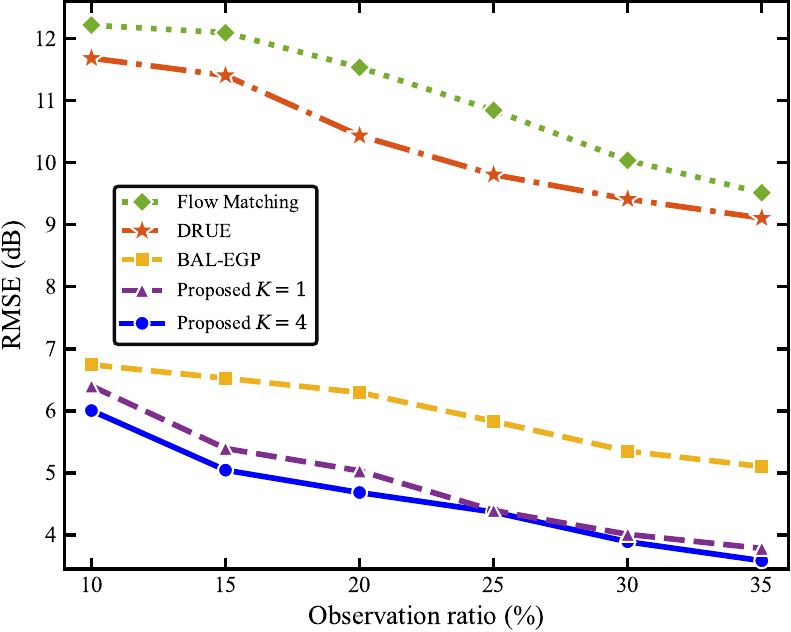} \caption{RMSE versus observation coverage on a representative radio map.}
\label{fig:SampleVsMthodRMSE}
\end{figure}

\subsection{Evaluation of Active Sampling Quality}

This experiment evaluates whether GraphBARC selects informative measurements
beyond its own reconstruction model. To isolate sampling quality from
the reconstruction backend, we reconstruct radio maps using three
independent common estimators: $k$-nearest-neighbor (KNN) interpolation,
radial basis function (RBF) interpolation, and Gaussian process regression
with a Matérn covariance kernel (Matérn-GPR) \citep{rasmussen2006gaussian}.
All estimators use the measurements collected by each sensing strategy,
including its initial observations. We use four neighbors for KNN,
local linear RBF interpolation with 128 neighbors, and a Matérn-$3/2$
kernel for GPR.

\begin{figure}[!t]
\centering
\includegraphics[width=0.75\linewidth]{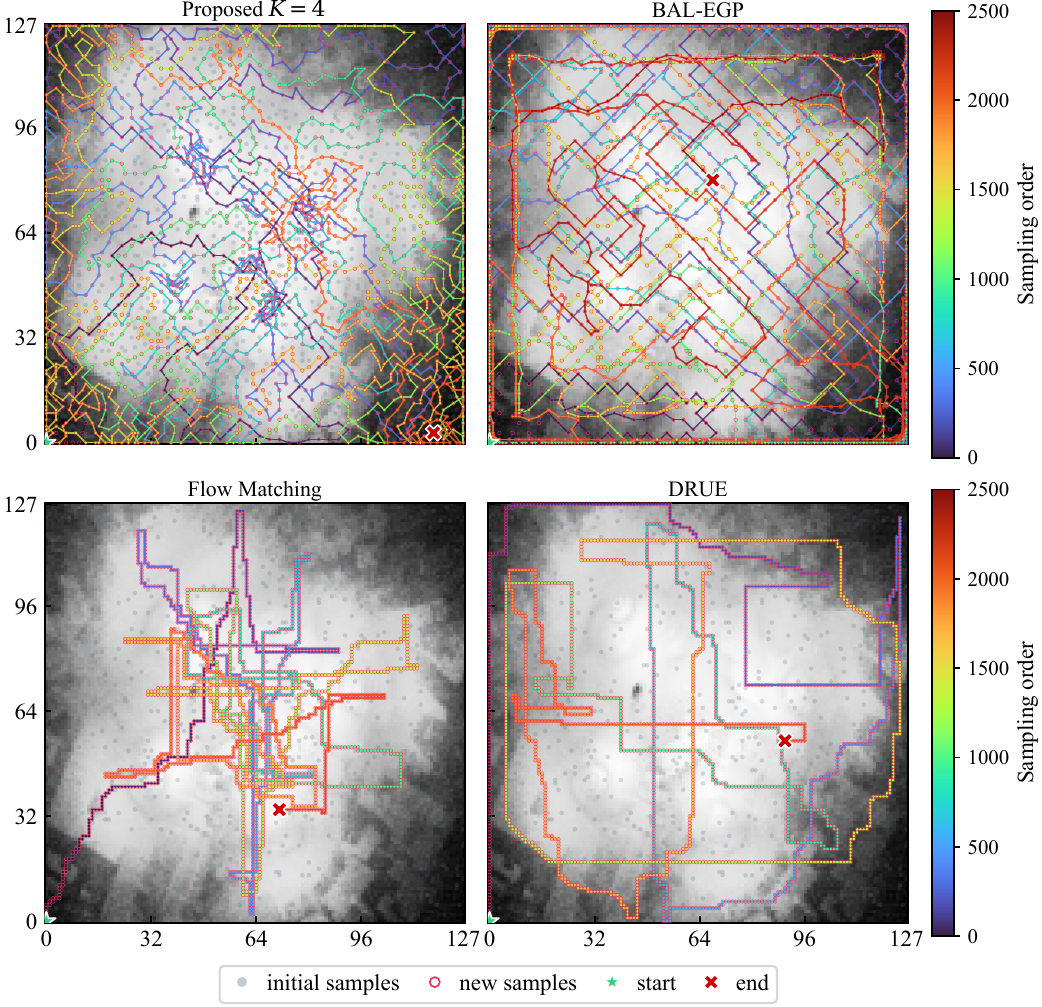}
\caption{Sampling trajectories generated by different active strategies.}
\label{fig:trajectories}
\end{figure}

\begin{table}
\caption{RMSE of independent reconstruction methods using measurements acquired
by different sensing strategies}

\centering\label{Tab:radio map different methods} \renewcommand\arraystretch{1.5}
\begin{tabular}{ p{3.0cm}<{\centering}| p{2.5cm}<{\centering}| p{2.5cm}<{\centering}| p{2.5cm}<{\centering}}
\hline

Scheme          & KNN      & RBF    & Matérn-GPR \\

\hline

Random          &$6.201$   &$5.529$	&$5.675$\\
\hline

Flow Matching    &$8.965$   &$7.225$	&$7.869$\\
\hline

DRUE             &$7.975$   &$6.269$   &$6.657$\\
\hline

BAL-EGP          &$6.566$   &$5.696$    &$5.841$\\
\hline

Proposed ($K=1$)  &$5.979$ &$5.520$ &$5.735$ \\
\hline

Proposed ($K=4$)  &$\mathbf{5.470}$ &$\mathbf{5.233}$ &$\mathbf{5.332}$ \\
\hline

\end{tabular}
\end{table}

Fig.~\ref{fig:trajectories} first visualizes the sampling trajectories
generated by different methods. GraphBARC with $K=4$ distributes
measurements across broad spatial coverage while prioritizing propagation-critical
regions, such as shadow areas and LOS/NLOS boundaries, where spatial
heterogeneity often leads to higher posterior uncertainty. BAL-EGP
also provides broad spatial coverage, but adapts less to propagation
structures. Flow Matching concentrates mainly on the central high-RSS
region, resulting in weaker reconstruction in low-RSS and boundary
areas, while DRUE emphasizes uncertainty boundaries but leaves larger
regions underexplored. These differences illustrate how acquisition
strategies influence the spatial distribution of collected measurements.

Table~\ref{Tab:radio map different methods} compares the resulting
RMSE across the three independent reconstruction methods. It is observed
that GraphBARC with $K=4$ achieves the lowest RMSE under all three
estimators. Compared with random sampling, it reduces RMSE by $11.8\%$,
$5.4\%$, and $6.0\%$ under KNN, RBF, and Matérn-GPR, respectively.
The corresponding reductions over BAL-EGP are $16.7\%$, $8.1\%$,
and $8.7\%$. Notably, $K=4$ further improves upon $K=1$ by $8.5\%$,
$5.2\%$, and $7.0\%$, respectively. Flow Matching and DRUE may underperform
random sampling because their trajectories are spatially biased, whereas
random sampling provides more balanced global coverage. Overall, the
consistent improvements across different estimators indicate that
the proposed nonmyopic strategy acquires more informative measurements
whose reconstruction benefits extend beyond the graph-based model.

\subsection{Reliability and Efficiency}

This section evaluates reliability and efficiency of the proposed
GraphBARC. All experiments are conducted on a workstation equipped
with an Intel Core i7-10700K CPU, 32 GB RAM, and an NVIDIA GeForce
RTX 4060 Ti GPU.

Table~\ref{Tab:different initialization} evaluates the reliability
of GraphBARC with initialization ratios of $1\%$\textendash $4\%$
of the grids, followed by active sensing until the total observation
coverage reaches $10\%$ or $15\%$. GraphBARC consistently outperforms
random sampling across all initialization settings, reducing RMSE
by $3.5\%$\textendash $13.2\%$ at $10\%$ coverage and $10.3\%$\textendash $12.9\%$
at $15\%$ coverage. Notably, the reconstruction performance exhibits
relatively low sensitivity to initialization size at $15\%$ coverage,
with RMSE ranging from $4.774$ to $4.916$ dB across all tested settings.
Even with only $1\%$ initial observations, the proposed method achieves
RMSEs of $5.704$ and $4.774$ dB at $10\%$ and $15\%$ coverage,
respectively, outperforming random sampling by $13.2\%$ and $12.9\%$.
These results indicate that GraphBARC can be effectively initialized
with limited measurements and progressively improve reconstruction
through active sensing, without requiring an extensive initial observation
set.

Fig.~\ref{fig:Validity and tightness} validates the numerical decision
bounds using 50 frozen belief states from all maps. In Fig.~\ref{fig:Validity and tightness}(a),
the decision regret, evaluated against an independent high-accuracy
reference optimum, is covered by the reported upper bound in all states
after accounting for reference uncertainty and rounding allowance.
All states satisfy the prescribed tolerance, and 47 exhibit regret
within the rounding allowance. Fig.~\ref{fig:Validity and tightness}(b)
isolates the benefit of retaining shared response errors when comparing
competing routes. The mean error radius decreases by $8.81\%$, $19.80\%$,
and $33.42\%$ relative to separate route-error bounds when the routes
share one, two, and three measurement locations, respectively. These
results support the validity of the stopping criterion and demonstrate
that shared-error comparisons provide tighter bounds under the fixed
learned belief and declared feasible-route family.

\begin{table}
\caption{Sensitivity of reconstruction to the initialization ratios}

\centering\label{Tab:different initialization} \renewcommand\arraystretch{1.3}
\begin{tabular}{ p{2.8cm}<{\centering}| p{1.8cm}<{\centering}| p{1.8cm}<{\centering}| p{1.8cm}<{\centering}| p{1.8cm}<{\centering}}
\hline

\multirow{2}{*}{Scheme} &
\multicolumn{2}{c|}{10\%} &
\multicolumn{2}{c}{15\%} \\
\cline{2-5}

          & MAE      & RMSE    & MAE      & RMSE \\

\hline
Random                &4.326   &6.569    &3.559  &5.481\\
\hline
1\% + Active          &3.742	&5.704	&3.160	&4.774\\
\hline
2\% + Active          &4.019	&6.217	&3.162	&4.891\\
\hline
3\% + Active          &4.073	&6.336	&3.209	&4.846\\
\hline
4\% + Active          &3.879	&5.986	&3.194	&4.916\\
\hline

\end{tabular}
\end{table}

\begin{figure}[!t]
\centering
\includegraphics[width=0.90\linewidth]{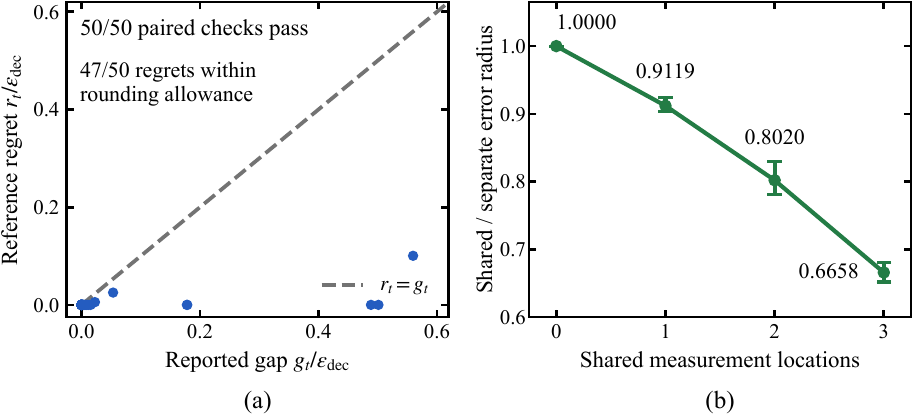} \caption{Validity and tightness of decision-error bounds: a) Reference decision
regret $r_{t}$ versus the reported decision-gap bound $g_{t}$; b)
Shared-to-separate error-radius ratio versus the number of shared
measurement locations.}
\label{fig:Validity and tightness}
\end{figure}

Table~\ref{Tab:Runtime} evaluates the computational efficiency of
four-step planning over 50 frozen belief states using total runtime,
cumulative PCG iterations, generated prefixes, and route-value comparisons.
Fixed-Gram uses fixed-accuracy response solves without prefix pruning;
Adaptive-Gram adds selective PCG refinement; Pruned-Gram additionally
applies error-aware prefix screening; and Shared-Pruned further exploits
shared-error bounds for comparisons. Compared with Fixed-Gram, Adaptive-Gram
reduces PCG iterations by $38.2\%$ and runtime by $13.7\%$, with
essentially unchanged search workload. Adding prefix screening further
reduces generated prefixes and route comparisons by $96.3\%$ and
$97.6\%$, respectively, yielding a $53.3\%$ reduction in runtime.
These results identify prefix pruning as the primary source of search
acceleration. Shared-Pruned slightly reduces the search workload further,
but incurs approximately $7\%$ higher runtime than Pruned-Gram due
to the additional bound-evaluation overhead. Nevertheless, it achieves
an overall $56.8\%$ reduction in planning time relative to Fixed-Gram,
demonstrating the complementary benefits of adaptive response computation
and bound-guided pruning.

\section{Conclusion}

\begin{table}
\caption{Runtime and computational workload of four-step planning over 50 frozen
states}

\centering\label{Tab:Runtime} \renewcommand\arraystretch{1.3}
\begin{tabular}{ p{2.8cm}<{\centering}| p{1.8cm}<{\centering}| p{1.8cm}<{\centering}| p{1.8cm}<{\centering}| p{1.8cm}<{\centering}}
\hline

\multirow{3}{*}{Scheme}     & \multirow{3}{*}{Time (s)}
& \multirow{3}{1.8cm}{\arraybackslash\centering PCG iterations ($\times 10^{-3}$)}
& \multirow{3}{1.8cm}{\arraybackslash\centering Generated prefixes ($\times 10^{-3}$)}
& \multirow{3}{1.8cm}{\arraybackslash\centering Route comp. ($\times 10^{-3}$)} \\

& & & &\\
& & & &\\

\hline
Fixed-Gram             &3.344   &23.33    &1558.3    &672.7\\
\hline
Adaptive-Gram          &2.886	&14.42	&1558.3	&673.1\\
\hline
Pruned-Gram            &1.348	&14.42	&58.34	&15.98\\
\hline
Shared-Pruned          &1.443	&14.42	&58.24	&15.90\\
\hline

\end{tabular}
\end{table}

This paper presented GraphBARC, a graph-structured Bayesian framework
for nonmyopic active radio map construction with certified decision-error
control. Its learned graph-induced belief captures spatial dependencies,
enabling analytical posterior updates and information propagation
across the map. By evaluating joint map-wide Bayes risk reduction
over feasible routes, GraphBARC accounts for measurement redundancy
and future sensing opportunities beyond one-step gains. Shared posterior
responses, residual-certified route comparisons, and safe prefix screening
reduce redundant computation while focusing numerical refinement on
unresolved decisions. Under fixed-belief and complete coverage of
the declared route family, the planner provides a computable decision-gap
bound, with certified exclusions preserved during refinement. Experiments
show that GraphBARC reduces mean RMSE by over $16\%$ relative to
the evaluated methods cuts core planning time by $56\%$ compared
with the fixed-accuracy, unpruned implementation. Its selected measurements
also achieve the lowest RMSE among the evaluated strategies under
external independent reconstruction methods.

Future work will extend graph-structured beliefs to dynamic spatio-temporal
radio environments and generalize the planner to cooperative multi-UAV
sensing. Another promising direction is to learn transferable graph
priors for previously unseen environments, where newly acquired observations
progressively update the posterior belief and refine graph factors
for environment-specific adaptation.

\appendix
\section{Proofs and Auxiliary Bounds}

\subsection{Proof of Proposition 1}\label{app:Proposition 1 proof}

The logarithm of the graph belief in (\ref{eq:graph_prior}) can be
written as
\begin{equation}
\begin{aligned}\log q_{t}(\boldsymbol{\xi};\mathcal{D}_{t}) & =-\frac{1}{2}\sum^{M}_{i=1}(\mathbf{E}_{i}\boldsymbol{\xi}-\mathbf{m}_{i})^{\mathsf{T}}\mathbf{\Omega}_{i}(\mathbf{E}_{i}\boldsymbol{\xi}-\mathbf{m}_{i})\\
 & \hspace*{0.4cm}-\frac{1}{2}\sum_{(i,j)\in\mathcal{E}}(\mathbf{A}_{ij}\boldsymbol{\xi})^{\mathsf{T}}\mathbf{W}_{ij}(\mathbf{A}_{ij}\boldsymbol{\xi})+C,
\end{aligned}
\label{eq:Proof-1}
\end{equation}
where $C$ is independent of $\boldsymbol{\xi}$. Expanding the node
terms and collecting the quadratic and linear terms give
\begin{equation}
\begin{aligned}\log q_{t}(\boldsymbol{\xi};\mathcal{D}_{t}) & =-\frac{1}{2}\boldsymbol{\xi}^{\mathsf{T}}(\sum^{M}_{i=1}\mathbf{E}^{\mathsf{T}}_{i}\mathbf{\Omega}_{i}\mathbf{E}_{i}+\sum_{(i,j)\in\mathcal{E}}\mathbf{A}^{\mathsf{T}}_{ij}\mathbf{W}_{ij}\mathbf{A}_{ij})\boldsymbol{\xi}\\
 & \hspace*{0.4cm}+(\sum^{M}_{i=1}\mathbf{E}^{\mathsf{T}}_{i}\mathbf{\Omega}_{i}\mathbf{m}_{i})^{\mathsf{T}}\boldsymbol{\xi}+C',\\
 & =-\frac{1}{2}\boldsymbol{\xi}^{\mathsf{T}}\boldsymbol{\Lambda}_{t}\boldsymbol{\xi}+\boldsymbol{\eta}^{\mathsf{T}}_{t}\boldsymbol{\xi}+C'
\end{aligned}
\label{eq:eq:Proof-2}
\end{equation}
where $C'$ absorbs all terms independent of $\boldsymbol{\xi}$.

We next establish positive definiteness. For any nonzero vector $\mathbf{z}\in\mathbb{R}^{Md}$,
define $\mathbf{z}_{i}=\mathbf{E}_{i}\mathbf{z}$. Then,
\begin{equation}
\begin{aligned}\mathbf{z}^{\mathsf{T}}\boldsymbol{\Lambda}_{t}\mathbf{z} & =\sum_{(i,j)\in\mathcal{E}}(\mathbf{z}_{i}-\mathbf{M}_{ij}\mathbf{z}_{j})^{\mathsf{T}}\mathbf{W}_{ij}(\mathbf{z}_{i}-\mathbf{M}_{ij}\mathbf{z}_{j})\\
 & \hspace*{0.4cm}+\sum^{M}_{i=1}\mathbf{z}^{\mathsf{T}}_{i}\mathbf{\Omega}_{i}\mathbf{z}_{i}.
\end{aligned}
\label{eq:eq:Proof-2-1}
\end{equation}
Because $\mathbf{z}\neq\mathbf{0}$, at least one block $\mathbf{z}_{i}$
is nonzero. The second sum in (\ref{eq:eq:Proof-2-1}) is strictly
positive under $\mathbf{\Omega}_{i}\succ\mathbf{0}$, while the second
sum is nonnegative under $\mathbf{W}_{ij}\succeq\mathbf{0}$. Hence,
$\mathbf{z}^{\mathsf{T}}\boldsymbol{\Lambda}_{t}\mathbf{z}>0$ for
every $\mathbf{z}\neq\mathbf{0}$, which proves that $\boldsymbol{\Lambda}_{t}\succ\mathbf{0}$.

Finally, let $\boldsymbol{\mu}_{t}=\boldsymbol{\Lambda}^{-1}_{t}\boldsymbol{\eta}_{t}$
and completing the square gives
\begin{equation}
-\frac{1}{2}\boldsymbol{\xi}^{\mathsf{T}}\boldsymbol{\Lambda}_{t}\boldsymbol{\xi}+\boldsymbol{\eta}^{\mathsf{T}}_{t}\boldsymbol{\xi}=-\frac{1}{2}(\boldsymbol{\xi}-\boldsymbol{\mu}_{t})^{\mathsf{T}}\boldsymbol{\Lambda}_{t}(\boldsymbol{\xi}-\boldsymbol{\mu}_{t})+C'',\label{eq:eq:Proof-2-1-1}
\end{equation}
where $C''=\frac{1}{2}\boldsymbol{\eta}^{\mathsf{T}}_{t}\boldsymbol{\Lambda}^{-1}_{t}\boldsymbol{\eta}_{t}$
is independent of $\boldsymbol{\xi}$. Since $\boldsymbol{\Lambda}_{t}\succ\mathbf{0}$,
the normalized density is Gaussian with mean $\boldsymbol{\mu}_{t}=\boldsymbol{\Lambda}^{-1}_{t}\boldsymbol{\eta}_{t}$
and covariance $\boldsymbol{\Sigma}_{t}=\boldsymbol{\Lambda}^{-1}_{t}$,
thereby completing the proof.

\subsection{Proof of Proposition 2}\label{app:Proposition 2 proof}

Let $\mathbf{D}_{t,j}=\mathbf{C}_{t,j}-\widehat{\mathbf{C}}_{t,j}$
and let $\mathbf{D}^{P}_{t}$ stack the corresponding blocks. Since
$\boldsymbol{\Lambda}_{t}\mathbf{D}_{t,j}=\mathbf{E}^{\mathsf{T}}_{j}-\boldsymbol{\Lambda}_{t}\widehat{\mathbf{C}}_{t,j},$
the lower bound gives
\begin{equation}
\|\mathbf{D}_{t,j}\|_{F}\leq\|\boldsymbol{\Lambda}^{-1}_{t}\|_{2}\|\mathbf{E}^{\mathsf{T}}_{j}-\boldsymbol{\Lambda}_{t}\widehat{\mathbf{C}}_{t,j}\|_{F}\leq\epsilon_{t,j}.
\end{equation}
Block stacking yields $\|\mathbf{D}^{P}_{t}\|_{F}\leq\epsilon_{t}(P)$.
Because route locations are distinct, $\|\mathbf{E}_{P}\|_{2}=1$.
Hence,
\begin{equation}
\begin{aligned}\|\mathbf{S}^{P}_{t}-\widehat{\mathbf{S}}^{P}_{t}\|_{2} & =\|\mathrm{sym}(\mathbf{E}_{P}\mathbf{D}^{P}_{t})-\gamma_{t}\mathbf{I}\|_{2}\leq\epsilon_{t}(P)+\gamma_{t}.\end{aligned}
\label{eq: inverse2}
\end{equation}

The stabilization rule ensures $\widehat{\mathbf{S}}^{P}_{t}\succeq\mathbf{R}_{P}\succ\mathbf{0}.$
A valid explicit choice of spectral lower bounds is
\begin{equation}
\begin{aligned}\widehat{s}_{P} & =\max\{\lambda_{\min}(\mathbf{R}_{P_{t}}),\frac{1}{||(\widehat{\mathbf{S}}^{P_{t}}_{t})^{-1}||_{\infty}}\},\\
s_{P} & =\max\{\lambda_{\min}(\mathbf{R}_{P_{t}}),\widehat{s}_{P}-\epsilon_{t}(P_{t})-\gamma_{t}\}.
\end{aligned}
\end{equation}
Here, $\|\cdot\|_{\infty}$ denotes the maximum absolute row sum.

For the positive-definite numerical matrix,
\begin{equation}
\lambda_{\min}(\widehat{\mathbf{S}}^{P}_{t})=\frac{1}{\lambda_{\max}((\widehat{\mathbf{S}}^{P}_{t})^{-1})}\geq\frac{1}{\|(\widehat{\mathbf{S}}^{P}_{t})^{-1}\|_{\infty}}.
\end{equation}
Together with $\widehat{\mathbf{S}}^{P}_{t}\succeq\mathbf{R}_{P}$,
this establishes $\widehat{s}_{P}$. It implies
\begin{equation}
\lambda_{\min}(\mathbf{S}^{P}_{t})\geq\widehat{s}_{P}-\epsilon_{t}(P)-\gamma_{t}.
\end{equation}
Combining it with $\mathbf{S}^{P}_{t}\succeq\mathbf{R}_{P}$ establishes
$s_{P}$. Then
\begin{equation}
(\mathbf{S}^{P}_{t})^{-1}-(\widehat{\mathbf{S}}^{P}_{t})^{-1}=(\mathbf{S}^{P}_{t})^{-1}(\widehat{\mathbf{S}}^{P}_{t}-\mathbf{S}^{P}_{t})(\widehat{\mathbf{S}}^{P}_{t})^{-1}.
\end{equation}
Taking spectral norms proves the first bound in the proposition.

For the Gram blocks,
\begin{equation}
\adjustbox{max width=.88\linewidth}{\ensuremath{\mathbf{G}_{t,ij}-\widehat{\mathbf{G}}_{t,ij}=\mathbf{D}^{\mathsf{T}}_{t,i}\mathbf{W}\widehat{\mathbf{C}}_{t,j}+\widehat{\mathbf{C}}^{\mathsf{T}}_{t,i}\mathbf{W}\mathbf{D}_{t,j}+\mathbf{D}^{\mathsf{T}}_{t,i}\mathbf{W}\mathbf{D}_{t,j}.}}\label{eq:graph_edge-1-2}
\end{equation}
Using $\|\mathbf{A}^{\mathsf{T}}\mathbf{B}\|_{F}\leq\|\mathbf{A}\|_{F}\|\mathbf{B}\|_{F}$
and $\|\mathbf{W}^{1/2}\mathbf{D}_{t,j}\|_{F}\leq\omega\epsilon_{t,j}$
yields $\beta_{t,ij}$ in (\ref{eq:Two_bounded-1}). In particular,
$\beta_{t,ij}=\beta_{t,ji}\geq0$.

For the inverse contribution, $\mathbf{G}^{P}_{t}\succeq0$ implies
\begin{equation}
\adjustbox{max width=.89\linewidth}{\ensuremath{|\mathrm{tr}[((\mathbf{S}^{P}_{t})^{-1}-(\widehat{\mathbf{S}}^{P}_{t})^{-1})\mathbf{G}^{P}_{t}]|\leq\|(\mathbf{S}^{P}_{t})^{-1}-(\widehat{\mathbf{S}}^{P}_{t})^{-1}\|_{2}\mathrm{tr}(\mathbf{G}^{P}_{t}).}}\label{eq:graph_edge-1-2-1}
\end{equation}
Moreover,
\begin{equation}
\mathrm{tr}(\mathbf{G}^{P}_{t})=\|\mathbf{W}^{1/2}\mathbf{C}^{P}_{t}\|^{2}_{F}\leq(\|\mathbf{W}^{1/2}\widehat{\mathbf{C}}^{P}_{t}\|_{F}+\omega\epsilon_{t}(P))^{2}.
\end{equation}
Combining the last two inequalities with (\ref{eq:conditioning stability})
establishes the bound $\kappa_{t}(P)$ in (\ref{eq:Two_bounded-1}),
thereby completing the proof.

\subsection{Proof of Corollary 1}\label{Proof of Corollary 1}

Let $\mathbf{F}_{t,ij}=\mathbf{G}_{t,ij}-\widehat{\mathbf{G}}_{t,ij}$
be a shared Gram-block error, and define the innovation-inverse contribution
\begin{equation}
\zeta_{t}(P)=\mathrm{tr}(((\mathbf{S}^{P}_{t})^{-1}-(\widehat{\mathbf{S}}^{P}_{t})^{-1})\mathbf{G}^{P}_{t}).
\end{equation}
The shared Gram-error contribution satisfies
\begin{equation}
\adjustbox{max width=.89\linewidth}{\ensuremath{\begin{aligned} & J_{t}(P)-J_{t}(Q)-(\widehat{J}_{t}(P)-\widehat{J}_{t}(Q))\\
 & =\sum_{i,j\in\mathcal{P\cup Q}}\left\langle [(\widehat{\mathbf{S}}^{P}_{t})^{-1}]_{ij}-[(\widehat{\mathbf{S}}^{Q}_{t})^{-1}]_{ij},\mathbf{F}_{t,ij}\right\rangle +\zeta_{t}(P)-\zeta_{t}(Q),
\end{aligned}
}}\label{eq:graph_edge-1-2-3}
\end{equation}
where $\langle\mathbf{A},\mathbf{B}\rangle=\mathrm{tr}(\mathbf{A}^{\mathsf{T}}\mathbf{B})$
is the Frobenius inner product. Applying the triangle and Cauchy-Schwarz
inequalities gives
\begin{equation}
\ensuremath{\begin{aligned} & \left|J_{t}(P)-J_{t}(Q)-(\widehat{J}_{t}(P)-\widehat{J}_{t}(Q))\right|\\
 & \le\sum_{i,j\in\mathcal{P\cup Q}}\left\Vert \mathbf{V}_{t,ij}(P)-\mathbf{V}_{t,ij}(Q)\right\Vert _{\mathrm{F}}\beta_{t,ij}\\
 & \hspace*{0.5cm}+\kappa_{t}(P)+\kappa_{t}(Q)=e_{t}(P,Q).
\end{aligned}
}\label{eq:graph_edge-1-2-4}
\end{equation}
For the separate route-value allowance
\begin{equation}
a_{t}(P)=\sum_{i,j\in P}\|[(\widehat{\mathbf{S}}^{P}_{t})^{-1}]_{ij}\|_{\mathrm{F}}\beta_{t,ij}+\kappa_{t}(P),
\end{equation}
the triangle inequality and $\beta_{t,ij}\ge0$ yield
\begin{equation}
e_{t}(P,Q)\le a_{t}(P)+a_{t}(Q).
\end{equation}
This completes the proof.

\subsection{Conditional-Response Error and Gram Evaluation}\label{Conditional-Response Error}

For a nonempty prefix $P$ and $j\notin P$, define
\begin{equation}
\begin{aligned}\mathbf{K}^{P}_{t,j} & =\mathbf{E}_{P}\mathbf{C}_{t,j}, & \mathbf{B}^{P}_{t,j} & =(\mathbf{S}^{P}_{t})^{-1}\mathbf{K}^{P}_{t,j},\\
\widehat{\mathbf{K}}^{P}_{t,j} & =\mathrm{sym}(\mathbf{E}_{P}\widehat{\mathbf{C}}_{t,j}), & \widehat{\mathbf{B}}^{P}_{t,j} & =(\widehat{\mathbf{S}}^{P}_{t})^{-1}\widehat{\mathbf{K}}^{P}_{t,j}.
\end{aligned}
\end{equation}
Then $\widehat{\mathbf{C}}^{P}_{t,j}=\widehat{\mathbf{C}}_{t,j}-\widehat{\mathbf{C}}^{P}_{t}\widehat{\mathbf{B}}^{P}_{t,j}$.
A computable bound is given by
\begin{equation}
\begin{aligned}\delta_{t,j}(P) & =\omega(\epsilon_{t,j}+\epsilon_{t}(P)\|\widehat{\mathbf{B}}^{P}_{t,j}\|_{F})\\
 & \hspace*{0.4cm}+\frac{\|\mathbf{W}^{1/2}\widehat{\mathbf{C}}^{P}_{t}\|_{F}+\omega\epsilon_{t}(P)}{s_{P}}\\
 & \hspace*{0.4cm}\times(\frac{\epsilon_{t}(P)+\epsilon_{t,j}}{2}+(\epsilon_{t}(P)+\gamma_{t})\|\widehat{\mathbf{B}}^{P}_{t,j}\|_{F}).
\end{aligned}
\end{equation}
To prove it, covariance symmetry gives
\begin{equation}
\|\mathbf{K}^{P}_{t,j}-\widehat{\mathbf{K}}^{P}_{t,j}\|_{F}\leq\frac{\epsilon_{t}(P)+\epsilon_{t,j}}{2}.
\end{equation}
The conditioning-coefficient difference satisfies
\begin{equation}
\adjustbox{max width=.88\linewidth}{\ensuremath{\mathbf{B}^{P}_{t,j}-\widehat{\mathbf{B}}^{P}_{t,j}=(\mathbf{S}^{P}_{t})^{-1}(\mathbf{K}^{P}_{t,j}-\widehat{\mathbf{K}}^{P}_{t,j}+(\widehat{\mathbf{S}}^{P}_{t}-\mathbf{S}^{P}_{t})\widehat{\mathbf{B}}^{P}_{t,j}).}}\label{eq:graph_edge-1-2-2}
\end{equation}
Using Proposition 2 yields
\begin{equation}
\adjustbox{max width=.88\linewidth}{\ensuremath{||\mathbf{B}^{P}_{t,j}-\widehat{\mathbf{B}}^{P}_{t,j}||_{F}\leq\frac{\frac{1}{2}(\epsilon_{t}(P)+\epsilon_{t,j})+(\epsilon_{t}(P)+\gamma_{t})\|\widehat{\mathbf{B}}^{P}_{t,j}\|_{F}}{s_{P}}.}}\label{eq:graph_edge-1-2-2-1}
\end{equation}
Finally, the conditional-response error decomposes as
\begin{equation}
\begin{aligned}\mathbf{C}^{P}_{t,j}-\widehat{\mathbf{C}}^{P}_{t,j}={} & \mathbf{D}_{t,j}-\mathbf{D}^{P}_{t}\widehat{\mathbf{B}}^{P}_{t,j}-\mathbf{C}^{P}_{t}(\mathbf{B}^{P}_{t,j}-\widehat{\mathbf{B}}^{P}_{t,j}).\end{aligned}
\end{equation}
Applying the weighted Frobenius norm and the preceding bounds establishes
\begin{equation}
\left\Vert \mathbf{W}^{1/2}(\mathbf{C}^{P}_{t,j}-\widehat{\mathbf{C}}^{P}_{t,j})\right\Vert _{F}\leq\delta_{t,j}(P).
\end{equation}
For the empty prefix, only the root-response error remains, giving
$\delta_{t,j}(\varnothing)=\omega\epsilon_{t,j}$.

\subsection{Proof of Lemma 1}\label{Proof of Lemma 1}

Consider a feasible continuation $T$ of prefix $P$, define $\mathbf{C}^{P}_{t,T}=\boldsymbol{\Sigma}^{P}_{t}\mathbf{E}^{\mathsf{T}}_{T}$
and $\mathbf{S}^{P}_{t}(T)=\mathbf{E}_{T}\boldsymbol{\Sigma}^{P}_{t}\mathbf{E}^{\mathsf{T}}_{T}+\mathbf{R}_{T}.$
Since $\mathbf{S}^{P}_{t}(T)\succeq\mathbf{R}_{T}\succ0$,
\begin{equation}
\begin{aligned}J_{t}(P\oplus T)-J_{t}(P) & =\operatorname{tr}\left[(\mathbf{S}^{P}_{t}(T))^{-1}(\mathbf{C}^{P}_{t,T})^{\mathsf{T}}\mathbf{W}\mathbf{C}^{P}_{t,T}\right]\\
 & \leq\sum_{j\in T}\|\mathbf{W}^{1/2}\mathbf{C}^{P}_{t,j}\mathbf{R}^{-1/2}_{j}\|^{2}_{F}\\
 & \leq\sum_{j\in T}\overline{b}_{t,j}(P).
\end{aligned}
\end{equation}
The last inequality uses the conditional-response error and $\ensuremath{\|\mathbf{R}^{-1/2}_{j}\|^{2}_{2}=1/\lambda_{\min}(\mathbf{R}_{j})}.$
Every feasible continuation contains at most $r_{P}$ locations drawn
from $\mathcal{A}_{t}(P)$. Replacing its sum by the largest at most
$r_{P}$ complete the proof.

\subsection{Proof of Proposition 3}\label{Proof of Proposition 3}

For any prefix $P$ excluded by the certified continuation bound,
every feasible completion $P'$ satisfies $J_{t}(P')\leq J_{t}(Q^{0}_{t}).$
Since each incumbent replacement is certified by a positive lower
comparison bound, the exact incumbent value is nondecreasing. Therefore,
\begin{equation}
J_{t}(P')\leq J_{t}(Q^{0}_{t})\leq J_{t}(Q_{t}).
\end{equation}
Likewise, any terminal route previously excluded by a certified comparison
against an earlier incumbent remains dominated by $Q_{t}$. By complete
route coverage, every remaining terminal route belongs to $\mathcal{T}_{t}$
or has been certified as dominated. For each $P\in\mathcal{T}_{t}$,
the valid comparison bound gives
\begin{equation}
J_{t}(P)-J_{t}(Q_{t})\leq\overline{\Delta}_{t}(P,Q_{t})\leq g_{t}.
\end{equation}
Taking the maximum over the declared route family establishes (\ref{eq:Refinement-compatible}).
Finally, numerical refinement changes only the approximate responses
and their certificates. Previously certified dominance relations therefore
remain valid, and excluded branches need not be revisited. This completes
the proof.

\bibliographystyle{plainnat}
\bibliography{Bib/IEEEabrv,Bib/RadioMap}

\end{document}

%% file: manuscript-commands.tex
\usepackage[acronym]{glossaries}
\glsdisablehyper
\newcommand{\newac}{\newacronym}
\newcommand{\ac}{\gls}

\newcommand{\acpl}{\glspl}

\input{JCgroup_acronym.tex}
\newac{dl}{DL}{deep learning}
\newac{rt}{RT}{ray tracing}
\newac{nn}{NN}{neural network}
\newac{tx}{TX}{transmitter}
\newac{rx}{RX}{receiver}
\newtheorem{prop}{Proposition}
\newtheorem{cor}{Corollary}
\newtheorem{lem}{Lemma}

%% file: JCgroup_acronym.tex
\newac{speb}{SPEB}{square position error bound}
\newac[plural=EFIMs,firstplural=Fisher information matrices (EFIMs)]{efim}{EFIM}{Fisher information matrix}
\newac{ne}{NE}{Nash equilibrium}
\newac{mse}{MSE}{mean squared error}
\newac{toa}{TOA}{time-of-arrival}
\newac{snr}{SNR}{signal-to-noise ratio}
\newac{lan}{LAN}{local area network}
\newac{psd}{PSD}{positive semidefinite}
\newac{pd}{PD}{positive definite}
\newac{wrt}{w.r.t.}{with respect to}
\newac{lhs}{L.H.S.}{left hand side}
\newac{wp1}{w.p.1}{with probability 1}
\newac{kkt}{KKT}{Karush-Kuhn-Tucker}
\newac{wlog}{w.l.o.g.}{without loss of generality}
\newac{mle}{MLE}{maximum likelihood estimation}
\newac{gps}{GPS}{global positioning system}
\newac{rssi}{RSSI}{received signal strength indicator}
\newac{mimo}{MIMO}{multiple-input multiple-output}
\newac{csi}{CSI}{channel state information}
\newac{fdd}{FDD}{frequency division duplexing}
\newac{ms}{MS}{mobile station}
\newac{bs}{BS}{base station}
\newac{d2d}{D2D}{device-to-device}
\newac{slnr}{SLNR}{signal-to-interference-leakage-and-noise-ratio}
\newac{ula}{ULA}{uniform linear antenna array}
\newac{pas}{PAS}{power angular spectrum}
\newac{mmse}{MMSE}{minimum mean square error}
\newac{zf}{ZF}{zero-forcing}
\newac{rzf}{RZF}{regularized zero-forcing}
\newac{as}{AS}{angular spread}
\newac{aod}{AOD}{angle of departure}
\newac{iid}{i.i.d.}{independent and identically distributed} 
\newac{sinr}{SINR}{signal-to-interference-and-noise ratio}
\newac{tdd}{TDD}{time-division duplex}
\newac{rvq}{RVQ}{random vector quantization}
\newac{rhs}{R.H.S.}{right hand side}
\newac{mrc}{MRC}{maximum ratio combining}
\newac{cdf}{CDF}{cumulative distribution function}
\newac{a.s.}{a.s.}{almost surely}
\newac{los}{LOS}{line-of-sight}
\newac{jsdm}{JSDM}{joint spatial division and multiplexing}
\newac{map}{MAP}{maximum a posteriori}
\newac{klt}{KLT}{Karhunen-Lo\`eve Transform}
\newac{lbe}{LBE}{link bargaining equilibrium}
\newac{se}{SE}{Stackelberg equilibrium}
\newac{uav}{UAV}{unmanned aerial vehicle}
\newac{nlos}{NLOS}{non-line-of-sight}
\newac{pdf}{PDF}{probability density function}
\newac{em}{EM}{expectation-maximization}
\newac{knn}{KNN}{$k$-nearest neighbor}
\newac{svd}{SVD}{singular value decomposition}
\newac{nmf}{NMF}{non-negative matrix factorization}
\newac{umf}{UMF}{unimodality-constrained matrix factorization}
\newac{rmse}{RMSE}{rooted mean squared error}
\newac{olos}{OLOS}{obstructed line-of-sight}
\newac{mmw}{mmW}{millimeter wave}
\newac{ber}{BER}{bit error rate}
\newac{rss}{RSS}{received signal strength}
\newac{lp}{LP}{linear program}
\newac{ufw}{U-FW}{unimodal Frank-Wolfe}
\newac{utf}{UTF}{unimodality-constrained tensor factorization}
\newac{fw}{FW}{Frank-Wolfe}
\newac{iot}{IoT}{Internet-of-Things}
\newac{mae}{MAE}{mean absolute error}
\newac{crb}{CRB}{Cram\'er-Rao bound}
\newac{aoa}{AoA}{angle of arrival}
\newac{wcl}{WCL}{weighted centroid localization}

%% file: Bib/RadioMap.bib
@ARTICLE{WangSunZem:J25,
  author={Wang, Yixian and Sun, Geng and Sun, Zemin and Wang, Jiacheng and Li, Jiahui and Zhao, Changyuan and Wu, Jing and Liang, Shuang and Yin, Minghao and Wang, Pengfei and Niyato, Dusit and Sun, Sumei and In Kim, Dong},
  journal={IEEE Trans. Cognit. Commun. Netw.},
  title={Toward Realization of Low-Altitude Economy Networks: Core Architecture, Integrated Technologies, and Future Directions},
  year={2025},
  volume={11},
  number={5},
  pages={2788-2820},
  doi={10.1109/TCCN.2025.3601015}}

@ARTICLE{ZenChenJ:24,
  author={Zeng, Yong and Chen, Junting and Xu, Jie and Wu, Di and Xu, Xiaoli and Jin, Shi and Gao, Xiqi and Gesbert, David and Cui, Shuguang and Zhang, Rui},
  journal={IEEE Commun. Surveys Tuts.},
  title={A Tutorial On Environment-Aware Communications via Channel Knowledge Map for {6G}},
  year={2024},
  volume={},
  number={},
  pages={1-1},
  doi={10.1109/COMST.2024.3364508}}

@ARTICLE{WuZengJin:J24,
  author={Wu, Di and Zeng, Yong and Jin, Shi and Zhang, Rui},
  journal={IEEE Trans. Wireless Commun.},
  title={Environment-Aware Hybrid Beamforming by Leveraging Channel Knowledge Map},
  year={2024},
  volume={23},
  number={5},
  pages={4990-5005},
  doi={10.1109/TWC.2023.3323941}}

@ARTICLE{LiBoChen:J24,
  author={Li, Bowen and Chen, Junting},
  journal={IEEE Trans. Wireless Commun.},
  title={Radio Map-Assisted Approach for Interference-Aware Predictive {UAV} Communications},
  year={2024},
  volume={23},
  number={11},
  pages={16725-16741},
  doi={10.1109/TWC.2024.3446240}}

@ARTICLE{SunChe:J24,
  author={Sun, Hao and Chen, Junting},
  journal={IEEE Trans. Signal Process.},
  title={Integrated Interpolation and Block-Term Tensor Decomposition for Spectrum Map Construction},
  year={2024},
  volume={72},
  number={},
  pages={3896-3911},
  doi={10.1109/TSP.2024.3439513}}

@ARTICLE{ShaoZhaLi:J26,
  author={Shao, Xiaoyan and Zhang, Wence and Li, Haohan and Shao, Zhichao and Zhang, Zhiguang and Bao, Xu},
  journal={IEEE Internet Things J.},
  title={A Low-Complexity Channel Knowledge Map Construction Based on Environmental Partitioning and Interpolation Weight Learning},
  year={2026},
  volume={13},
  number={11},
  pages={24044-24056},
  doi={10.1109/JIOT.2026.3672412}}

@ARTICLE{TimiShre:J24,
  author={Timilsina, Subash and Shrestha, Sagar and Fu, Xiao},
  journal={IEEE Trans. Signal Process.},
  title={Quantized Radio Map Estimation Using Tensor and Deep Generative Models},
  year={2024},
  volume={72},
  number={},
  pages={173-189},
  doi={10.1109/TSP.2023.3336179}}

@ARTICLE{ZhaFu:J20,
  author={Zhang, Guoyong and Fu, Xiao and Wang, Jun and Zhao, Xi-Le and Hong, Mingyi},
  journal={IEEE Trans. Signal Process.},
  title={Spectrum Cartography via Coupled Block-Term Tensor Decomposition},
  year={2020},
  volume={68},
  number={},
  pages={3660-3675},
  doi={10.1109/TSP.2020.2993530}}

@ARTICLE{WangTaoNan:J25,
  author={Wang, Xiucheng and Tao, Keda and Cheng, Nan and Yin, Zhisheng and Li, Zan and Zhang, Yuan and Shen, Xuemin},
  journal={IEEE Trans. Cognit. Commun. Netw.},
  title={RadioDiff: An Effective Generative Diffusion Model for Sampling-Free Dynamic Radio Map Construction},
  year={2025},
  volume={11},
  number={2},
  pages={738-750},
  doi={10.1109/TCCN.2024.3504489}}

@ARTICLE{WangqianChen:J24,
  author={Chen, Wangqian and Chen, Junting},
  journal={IEEE Trans. Wireless Commun.},
  title={Diffraction and Scattering Aware Radio Map and Environment Reconstruction Using Geometry Model-Assisted Deep Learning},
  year={2024},
  volume={23},
  number={12},
  pages={19804-19819},
  doi={10.1109/TWC.2024.3487225}}

@ARTICLE{JaenCair:J26,
  author={Jaensch, Fabian and Caire, Giuseppe and Demir, Begüm},
  journal={IEEE Trans. Wireless Commun.},
  title={Radio Map Prediction From Aerial Images and Application to Coverage Optimization},
  year={2026},
  volume={25},
  number={},
  pages={308-320},
  doi={10.1109/TWC.2025.3583171}}

@ARTICLE{TegaRomer:J22,
  author={Teganya, Yves and Romero, Daniel},
  journal={IEEE Trans. Wireless Commun.},
  title={Deep Completion Autoencoders for Radio Map Estimation},
  year={2022},
  volume={21},
  number={3},
  pages={1710-1724},
  doi={10.1109/TWC.2021.3106154}}

@INPROCEEDINGS{ZhangZhouYou:C25,
  author={Zhang, Na and Zhou, Rui and You, Ming-Yi and Pu, Wenqiang},
  booktitle={Proc. European Signal Process. Conf. (EUSIPCO)},
  title={Adaptive Sampling for Online UAV-Aided Radio Map Sensing},
  year={2025},
  volume={},
  number={},
  pages={985-989},
  doi={10.23919/EUSIPCO63237.2025.11226752}}

@ARTICLE{PolySadYe:J24,
  author={Polyzos, Konstantinos D. and Sadeghi, Alireza and Ye, Wei and Sleder, Steven and Houssou, Kodjo and Calder, Jeff and Zhang, Zhi-Li and Giannakis, Georgios B.},
  journal={IEEE Trans. Wireless Commun.},
  title={Bayesian Active Learning for Sample Efficient {5G} Radio Map Reconstruction},
  year={2024},
  volume={23},
  number={12},
  pages={19382-19396},
  doi={10.1109/TWC.2024.3483112}}

@ARTICLE{ShreRomeChe:J23,
  author={Shrestha, Raju and Romero, Daniel and Chepuri, Sundeep Prabhakar},
  journal={IEEE Trans. Wireless Commun.},
  title={Spectrum Surveying: Active Radio Map Estimation With Autonomous UAVs},
  year={2023},
  volume={22},
  number={1},
  pages={627-641},
  doi={10.1109/TWC.2022.3197087}}

@INPROCEEDINGS{SunShicXiang:C26,
  author={Sun, Hao and Liu, Shicong and Yu, Xianghao and Sun, Ying},
  booktitle={Proc. Int. Conf. Wireless Commun. Signal Process. (ICASSP)},
  title={Flow Matching-Based Active Learning for Radio Map Construction with Low-Altitude UAVs},
  year={2026},
  volume={},
  number={},
  pages={21406-21410},
  doi={10.1109/ICASSP55912.2026.11463842}}

@ARTICLE{gu2025generat,
  author={Yuntong Gu and Xiangming meng and Zhiyuan Lin and Sheng Wu and Linling Kuang},
  journal={arXiv:2512.20108},
  title={Generative Bayesian Spectrum Cartography: Unified Reconstruction and Active Sensing via Diffusion Models},
  year={2025},
  volume={},
  number={},
  pages={},
  doi={https://arxiv.org/abs/2512.20108}}

@ARTICLE{ChenWangGuo:J25,
  author={Chen, Zhibo and Wang, Heng and Guo, Daoxing},
  journal={IEEE Wireless Commun. Lett.},
  title={{3D} Radio Map Estimation Based on Active Measurement Trajectory Selection},
  year={2025},
  volume={14},
  number={7},
  pages={1884-1888},
  doi={10.1109/LWC.2025.3557556}}

@INPROCEEDINGS{ZheXiao:C25,
  author={Zheng, Zhihao and Xiao, Limin and Zhao, Ming and Li, Yunzhou},
  booktitle={Proc. IEEE Int. Workshop Mach. Learn. Signal Process. (MLSP)},
  title={Efficient Indoor Radio Map Prediction with Improved Transformers and Active Sampling Strategies},
  year={2025},
  volume={},
  number={},
  pages={1-6},
  doi={10.1109/MLSP62443.2025.11204304}}

@ARTICLE{LiYaoAls:J26,
  author={Li, Mingxu and Shi, Yao and Alsusa, Emad and Zhang, Deyou and Deng, Jie and Su, Nanchi and You, Xiaohu},
  journal={IEEE Internet Things J.},
  title={End-to-End UAV-Enabled Adaptive {3D} Radio Mapping via Joint Optimization of Sparse Sampling and Reconstruction},
  year={2026},
  volume={13},
  number={11},
  pages={25100-25113},
  doi={10.1109/JIOT.2026.3676550}}

@ARTICLE{ZhaoHaoHua:J24,
  author={Zhao, Haitao and Hao, Qing and Huang, Hao and Gui, Guan and Ohtsuki, Tomoaki and Sari, Hikmet and Adachi, Fumiyuki},
  journal={IEEE Trans. Veh. Technol.},
  title={Online Trajectory Optimization for Energy-Efficient Cellular-Connected {UAV}s With Map Reconstruction},
  year={2024},
  volume={73},
  number={3},
  pages={3445-3456},
  doi={10.1109/TVT.2023.3323349}}

@misc{sionnarttechnicalreport,
      title={Sionna {RT}: Technical Report},
      author={Fayçal Aït Aoudia and Jakob Hoydis and Merlin Nimier-David and Baptiste Nicolet and Sebastian Cammerer and Alexander Keller},
      year={2025},
      eprint={2504.21719},
      archivePrefix={arXiv},
      primaryClass={cs.IT},
      url={https://arxiv.org/abs/2504.21719.},
}

@INPROCEEDINGS{ChenWChen:C25,
  author={Chen, Wangqian and Chen, Junting},
  booktitle={Proc. IEEE Int. Conf. Commun. (ICC)},
  title={Learning Blockage and Reflection Geometry for {MIMO} Beam Map Construction},
  year={2025},
  volume={},
  number={},
  pages={2321-2326},
  doi={10.1109/ICC52391.2025.11161713}}

@ARTICLE{ChenZhuWang:J26,
  author={Chen, Yiran and Zhu, Qiuming and Wang, Jie and Jia, Ziye and Wang, Xuan and Lin, Zhipeng and Huang, Yang and Wu, Qihui and Briso-Rodríguez, César},
  journal={IEEE Trans. Cogn. Commun. Netw.},
  title={{UAV}-Aided Efficient Informative Path Planning for Autonomous {3D} Spectrum Mapping},
  year={2026},
  volume={12},
  number={},
  pages={1664-1677},
  doi={10.1109/TCCN.2025.3603736}}

@INPROCEEDINGS{LuGao:C25,
  author={Lu, Wenlihan and Gao, Shijian and Wen, Miaowen and Liang, Yuxuan and Yang, Liuqing and Chae, Chan-Byoung and Poor, H. Vincent},
  booktitle={Int. Conf. Wireless Commun. and Signal Process. (WCSP)},
  title={Bayesian-Driven Graph Reasoning for Active Radio Map Construction},
  year={2025},
  volume={},
  number={},
  pages={1-6},
  doi={10.1109/WCSP68525.2025.1010321}}

@book{rasmussen2006gaussian,
  author    = {Carl Edward Rasmussen and Christopher K. I. Williams},
  title     = {Gaussian Processes for Machine Learning},
  publisher = {The MIT Press},
  address   = {Cambridge, MA, USA},
  year      = {2006},
  isbn      = {0-262-18253-X},
  url       = {https://gaussianprocess.org/gpml/}
}

@article{LinTao:J2026,
  author  = {Junyi Lin and Tao Wu and Hongjun Wang and
             Chao Dong and Zhexian Shen and Mingjin Zhang and
             Jiannong Cao},
  title   = {{UAV}-Enabled Adaptive Radio Sampling for High-Quality Radio Environment Map Construction in Unknown Environments},
  journal = {IEEE Trans. Mobile Comput.},
  year    = {2026},
  doi     = {10.1109/TMC.2026.3737955},
  note    = {Early Access}
}

@article{ChenXiao:J2025,
  author  = {Xinwei Chen and Xiaofeng Zhong and Zijian Zhang and Linglong Dai and Shidong Zhou},
  title   = {High-Efficiency Urban 3-D Radio Map Estimation Based on Sparse Measurements},
  journal = {IEEE Trans. Veh. Technol.},
  volume  = {74},
  number  = {10},
  pages   = {16488--16493},
  year    = {2025},
  doi     = {10.1109/TVT.2025.3573595}
}
